\documentclass[11pt,letterpaper,onecolumn]{quantumarticle}
\pdfoutput=1

\usepackage{booktabs}
\usepackage{feyn}
\usepackage{graphicx}
\usepackage{latexsym}
\usepackage{amssymb}
\usepackage{amsmath}
\usepackage{amsthm}
\usepackage{amsfonts}
\usepackage{bm}
\usepackage{multirow}
\usepackage{color}
\usepackage{comment}
\usepackage{bbm}
\usepackage{marvosym}
\usepackage{wasysym}
\usepackage{enumerate}  
\usepackage{hyperref}
\usepackage{url}
\usepackage{tikz}
\usetikzlibrary{quantikz2}
\usepackage{algorithm}% http://ctan.org/pkg/algorithm
\usepackage{algpseudocode}
\usepackage[numbers,sort&compress]{natbib}
\usepackage{orcidlink}
\usepackage{CJKutf8}

\newcommand{\ii}{\mathrm{i}}

\newtheorem{theorem}{Theorem}
\newtheorem{corollary}{Corollary}
\newtheorem{lemma}{Lemma}
\newtheorem{definition}{Definition}
\newtheorem{proposition}{Proposition}

\newtheorem{open}{Open Problem}

\newcommand{\xCornell}{Department of Physics, Cornell University, Ithaca, NY, USA}

\begin{document}

\title{Controlled jump in the Clifford hierarchy}

\author{\begin{CJK*}{UTF8}{gbsn}Yichen Xu (许轶臣)~\orcidlink{0000-0001-7914-9438}\end{CJK*}}
\email{yx639@cornell.edu}
\affiliation{\xCornell}

\author{\begin{CJK*}{UTF8}{gbsn}Xiao Wang (王骁)~\orcidlink{0000-0003-2898-3355}\end{CJK*}}
% \author{Xiao Wang~\orcidlink{0000-0003-2898-3355}}
\affiliation{\xCornell}

\begin{abstract}
We develop a simple and systematic route to higher levels of the qubit Clifford hierarchy by coherently controlling Clifford operations. Our approach is based on Pauli periodicity, defined for a Clifford unitary $U$ as the smallest integer $m\ge 1$ such that $U^{2^{m}}$ is a Pauli operator up to phase. We prove a sharp controlled-jump rule showing that the controlled gate $CU$ lies strictly in level $m+2$ of the hierarchy, and equivalently that $CU$ lies in level $k$ if $U^{2^{k-2}}$ is Pauli while no smaller positive power of $U$ is Pauli. We further quantify the resources required to realize large level jumps in the Clifford hierarchy by proving an essentially tight upper bound on Pauli periodicity as a function of the number of qubits, which implies that accessing high hierarchy levels through controlled Cliffords requires a number of target qubits that grows exponentially with the desired level. We complement this limitation with explicit infinite families of Pauli-periodic Cliffords whose controlled versions achieve asymptotically optimal jumps. As an application, we propose a protocol for preparing logical catalyst states that enable logical $Z^{1/2^k}$ phase gates via phase kickback from a single jumped Clifford.
\end{abstract}

\maketitle

\tableofcontents

\section{Introduction}

Fault-tolerant quantum computation naturally divides into two regimes: operations that are protected by stabilizer structure alone, and those that require additional non-stabilizer resources. Clifford circuits occupy the first regime: they preserve Pauli operators under conjugation, admit efficient classical simulation, and arise as the default logical gates for large classes of stabilizer codes. Universality is achieved by supplementing them with state injection and adaptive Clifford processing. The qubit Clifford hierarchy, introduced by Gottesman and Chuang~\cite{GottesmanChuang1999}, gives a precise stratification of this landscape as a nested family $\mathcal{C}_1 \subset \mathcal{C}_2 \subset \cdots$ defined by the action on Paulis: $\mathcal{C}_1$ is the Pauli group, $\mathcal{C}_2$ is the Clifford group, and higher levels contain structured non-Clifford operations implementable by gate teleportation given suitable resource states~\cite{GottesmanChuang1999,BravyiKitaev2005,CampbellTerhalVuillot2017}. 

This hierarchy is not merely a definition, but a guide for where overhead enters fault-tolerant implementations: non-Clifford gates are indispensable for universality and typically dominate cost~\cite{BravyiKitaev2005,bravyi2012Phys.Rev.A,duclos-cianci2013Phys.Rev.A,fowler2012Phys.Rev.A,jones2013Phys.Rev.A,CampbellAnwarBrowne2012}, yet while standard synthesis from a fixed universal set such as Clifford+$T$ is possible (approximately or exactly)~\cite{dawson2005solovay,nielsen2010quantum,Kliuchnikov2012Asymptotically,giles2013exact,RossSelinger2014,Mooney2021costoptimalsingle}, many useful primitives naturally sit above the third level (e.g., multi-controlled phases and fine-grained $Z$ rotations), so direct access to higher-level structure can yield savings in $T$ count, depth, and magic-state overhead. At the same time, transversal realizations of logical gates are fundamentally constrained by the Eastin--Knill theorem~\cite{EastinKnill2009}. Moreover,  locality-preserving logical gates in $D$ spatial dimensions are constrained within the $D$th level of the Clifford hierarchy~\cite{BravyiKoenig2013,PastawskiYoshida2015}. This motivates the construction of error correction codes that possess transversal gates in higher Clifford hierarchy~\cite{anderson2014,ha2018towards,hu2025climbing,golowich2025quantum}.  Meanwhile, the global structure of $\mathcal{C}_k$ on many qubits remains poorly understood, and most progress has come from isolating special families, namely semi-Cliffords~\cite{ZengChenChuang2008}, generalized semi-Cliffords and structural results for level three~\cite{BeigiShor2010,Pllaha2020unweylingclifford,desilva2021Proc.A,deSilvaLautsch2025}, diagonal hierarchy gates~\cite{CuiGottesmanKrishna2017,RengaswamyCalderbank2019}, and permutation gates at level three~\cite{he2025characterization}.
These analyses, while valuable, do not address a ubiquitous algorithmic mechanism: the addition of coherent control to an existing gate.

Controlled unitaries appear in phase estimation, conditional logic, arithmetic subroutines, and measurement-based gadgets. 
From a circuit viewpoint, adding a control qubit appears to be a mild modification, but its effect on hierarchy level can be dramatic and is far from automatic. 
Two natural questions arise for controlling a unitary $U$.
First, when does controlled-$U$ belong to the Clifford hierarchy at all?
Second, when it does, what is the \emph{exact} level containing it?
Anderson and Weippert derived strong necessary conditions for controlled gates to lie in the qubit Clifford hierarchy, providing evidence that controlled gates form a highly constrained subclass~\cite{AndersonWeippert2024}. 
In a complementary special case, Surti, Daguerre and Kim showed that if a Clifford squares to a Pauli string then its controlled version lies in the third level~\cite{surti2025efficient}. 
In parallel, Ref.~\cite{kim2025any} analyzed optimized depth for controlled-Clifford constructions within Clifford+$T$ compilation, emphasizing that synthesis costs and hierarchy level are related but distinct notions. 
What has been missing is a broad and sharp criterion that simultaneously answers both questions for controlled targets and quantifies the resources required to achieve large level jumps. 

In this work we initiate a systematic study of generating higher-level Clifford-hierarchy gates via controlled-Clifford operations. 
We introduce Pauli periodicity, which is defined as the least number of repeated squarings needed for a Clifford unitary to become a Pauli operator up to phase. 
Our main structural result gives an exact controlled jump rule: if a Clifford gate $U$ has Pauli periodicity $m$, then the controlled-$U$ gate lies strictly in the $(m+2)$th level of the hierarchy, meaning it is contained in $\mathcal{C}_{m+2}$ but not in $\mathcal{C}_{m+1}$. 
This turns the existence of a Pauli power, which previously appeared as a necessary condition, into a complete classification for controlled Cliffords. 
It also subsumes the known third-level characterization as the case $m=1$~\cite{surti2025efficient}. 

Having identified the exact hierarchy level of controlled-$U$ in terms of the Pauli periodicity of $U$, the natural next question is how large a Pauli periodicity, and hence how high a hierarchy level, can be attained with a register of $n$ qubits. 
Using the binary symplectic representation of Clifford operators, we prove a tight upper bound on Pauli periodicity in terms of the number of qubits. 
Combining this bound with the controlled jump rule yields an exponential lower bound on the number of target qubits required to realize controlled gates that first appear in high hierarchy levels. 
This explains why large hierarchy jumps cannot be obtained on small registers by simply adding controls.

We complement these limitations with explicit constructions that achieve the extreme behavior. 
We propose a family of periodic Clifford gates that saturate the qubit lower bound and therefore realize asymptotically optimal controlled jumps. 
We also show that the resulting jumped Cliffords admit exact decompositions over Clifford+$T$, which connects the hierarchy classification to practical compilation.

Finally, we connect the algebraic results to a fault-tolerant application in higher-order phase resources. 
Drawing on our algebraic results, we present a protocol for preparing a logical catalyst state that enables logical $Z^{1/2^k}$ phase gates. 
This provides a concrete route by which controlled-Clifford structure can be used to access fine-grained phase gates in a fault-tolerant fashion~\cite{GottesmanChuang1999,BravyiKitaev2005,CampbellTerhalVuillot2017}.

\section{Periodic Clifford gates and controlled jump}

\subsection{Preliminaries}
To set the stage for our discussion, we review some useful definitions and facts about the qubit Clifford hierarchy.

\begin{definition}[$n$-qubit Pauli group]
Let $n\ge 1$. Define the single-qubit Pauli operators
\begin{equation}
  X=\begin{pmatrix}0&1\\1&0\end{pmatrix},\qquad
  Y=\begin{pmatrix}0&-\ii\\ \ii&0\end{pmatrix},\qquad
  Z=\begin{pmatrix}1&0\\0&-1\end{pmatrix}.
\end{equation}
The $n$-qubit Pauli group is the subgroup $\mathcal{P}_n\subset U(2^n)$ generated by $\{\ii I, X_j, Y_j, Z_j\}_{j=1}^n$, where $X_j:=I^{\otimes (j-1)}\otimes X\otimes I^{\otimes (n-j)}$ (and similarly for $Y_j,Z_j$). Equivalently,
\begin{equation}
  \mathcal{P}_n=\bigl\{\omega\, P_1\otimes\cdots\otimes P_n:\; \omega\in\{\pm 1,\pm\ii\},\; P_j\in\{I,X,Y,Z\}\bigr\}.
\end{equation}
\end{definition}

\begin{definition}[Qubit Clifford hierarchy~\cite{GottesmanChuang1999}]\label{def:clifford-hierarchy}
Fix $n\ge 1$. Let $\mathcal{P}_n$ denote the $n$-qubit Pauli group. The $n$-qubit Clifford hierarchy is the nested family of subsets $\{\mathcal{C}_k^{(n)}\}_{k\ge 1}$ defined recursively by
\begin{equation}
  \mathcal{C}_1^{(n)}:=\mathcal{P}_n,\qquad
  \mathcal{C}_{k+1}^{(n)}:=\bigl\{U\in U(2^n):\; U P U^{\dagger}\in \mathcal{C}_k^{(n)}\ \text{for all}\ P\in \mathcal{P}_n\bigr\}.
\end{equation}
In particular, the second level $\mathcal{C}_2^{(n)}$ is the Clifford group. We denote the collection of all unitaries in the $n$ qubit hierarchy as $\mathcal{CH}$.
\end{definition}

\begin{proposition}[Basic properties of $\mathcal{CH}$]\label{prop:CH-basic}
Fix $n\ge 1$.
\begin{enumerate}[(1)]
  \item (\textit{Nestedness}) For all $k\ge 1$, one has $\mathcal{C}_k^{(n)}\subseteq \mathcal{C}_{k+1}^{(n)}$. For all $m\geq n$, $\mathcal{C}_k^{(n)}\subseteq \mathcal{C}_{k}^{(m)}$.
  \item (\textit{Group property}) The second level $\mathcal{C}_2^{(n)}$ is a group under multiplication (the Clifford group). For $k\ge 3$, $\mathcal{C}_k^{(n)}$ is in general not a group.
  \item (\textit{Clifford invariance}) If $U\in \mathcal{C}_k^{(n)}$ and $C_1,C_2\in \mathcal{C}_2^{(n)}$, then $C_1 U C_2\in \mathcal{C}_k^{(n)}$.
  \item (\textit{Generating set for $\mathcal{C}_2^{(n)}$}) The Clifford group $\mathcal{C}_2^{(n)}$ is generated by the $S$ gate
  \begin{equation}
    S:=\begin{pmatrix}1&0\\0&\ii\end{pmatrix}
  \end{equation}
  the Hadamard gate
  \begin{equation}
    H:=\frac{1}{\sqrt{2}}\begin{pmatrix}1&1\\1&-1\end{pmatrix}
  \end{equation}
  and the two-qubit controlled-NOT gate
  \begin{equation}
    \mathrm{CNOT}:=\ket{0}\!\bra{0}\otimes I+\ket{1}\!\bra{1}\otimes X.
  \end{equation}
  \item (Clifford+$T$ circuit) Let
  \begin{equation}
    T:=\mathrm{diag}(1,e^{\ii\pi/4})
  \end{equation}
  denote the single-qubit $T$ gate. For unitaries in levels $k\ge 3$, Clifford+$T$ is approximately universal in the sense that for any $V\in U(2^n)$ and any $\varepsilon>0$, there exists a Clifford+$T$ circuit that $\varepsilon$-approximates $V$ with depth $\mathrm{polylog}(1/\varepsilon)$ ~\cite{nielsen2010quantum,dawson2005solovay}.
  \item (Single-qubit phase gates in the hierarchy) For $k\ge 0$, the single-qubit phase gate
  \begin{equation}
    Z^{1/2^k}:=\mathrm{diag}\bigl(1,e^{\ii\pi/2^k}\bigr)
  \end{equation}
  lies in the $(k+1)$st level $\mathcal{C}_{k+1}^{(1)}$ (with the convention $\mathcal{C}_1^{(1)}=\mathcal{P}_1$)~\cite{Mooney2021costoptimalsingle,CuiGottesmanKrishna2017}.
\end{enumerate}
\end{proposition}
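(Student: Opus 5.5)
We prove the six items of Proposition~\ref{prop:CH-basic} one at a time. Most follow directly from Definition~\ref{def:clifford-hierarchy} by induction on $k$. The rest rely on standard facts, which we cite.

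\emph{Nestedness.} We induct on $k$.
\begin{itemize}
\item[] For $k=1$: every Pauli $P$ maps Paulis to Paulis up to sign, since $PQP^\dagger=\pm Q$. Hence $\mathcal{C}_1^{(n)}\subseteq\mathcal{C}_2^{(n)}$.
\item[] Inductive step: if $\mathcal{C}_{k-1}^{(n)}\subseteq\mathcal{C}_k^{(n)}$ and $U\in\mathcal{C}_k^{(n)}$, then $UPU^\dagger\in\mathcal{C}_{k-1}^{(n)}\subseteq\mathcal{C}_k^{(n)}$ for every $P$. So $U\in\mathcal{C}_{k+1}^{(n)}$.
\end{itemize}
For the statement in the number of qubits, we interpret $U\in U(2^n)$ as $U\otimes I_{2^{m-n}}$. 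Every $m$-qubit Pauli factors as $P\otimes Q$, with $P$ acting on the first $n$ qubits and $Q$ on the remaining $m-n$. Then $(U\otimes I)(P\otimes Q)(U\otimes I)^\dagger=(UPU^\dagger)\otimes Q$. We prove by induction on $k$ that $V\in\mathcal{C}_k^{(n)}$ and $Q\in\mathcal{P}_{m-n}$ imply $V\otimes Q\in\mathcal{C}_k^{(m)}$.
\begin{itemize}
\item[] Base case $k=1$: $V\otimes Q$ is a tensor product of Paulis, hence a Pauli.
\item[] Inductive step: conjugating $V\otimes Q$ by $P'\otimes Q'$ gives $(VP'V^\dagger\text{-type term})\otimes(\pm Q)$. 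More precisely, $(V\otimes Q)(P'\otimes Q')(V\otimes Q)^\dagger=(VP'V^\dagger)\otimes(QQ'Q^\dagger)$. Here $QQ'Q^\dagger$ is a Pauli, and $VP'V^\dagger\in\mathcal{C}_{k-1}^{(n)}$, so the induction hypothesis applies.
\end{itemize}
Taking $Q=I$ gives the claim.

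\emph{Group property and Clifford invariance.} For the Clifford group, closure under products and inverses is immediate from the definition:
\begin{itemize}
\item[] $(UV)P(UV)^\dagger=U(VPV^\dagger)U^\dagger$, which is Pauli when $U,V$ are Clifford;
\item[] if $U$ is Clifford, conjugation by $U$ is a bijection of the finite set $\mathcal{P}_n$, so $U^\dagger PU$ is also Pauli.
\end{itemize}
For the failure of the group property when $k\ge3$, we exhibit an explicit counterexample in the third level. The statement only says ``in general not'', so a single standard example suffices, e.g.\ one from~\cite{ZengChenChuang2008}. This is the step we expect to require the most care: a product of two third-level gates that leaves $\mathcal{C}_3$ has to be verified by computation.

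Clifford invariance follows by induction on $k$, using $C_1UC_2\,P\,(C_1UC_2)^\dagger=C_1\,U(C_2PC_2^\dagger)U^\dagger\,C_1^\dagger$. Here $C_2PC_2^\dagger$ is a Pauli $P'$, so $UP'U^\dagger\in\mathcal{C}_{k-1}^{(n)}$. It then remains to show that conjugation by $C_1$ preserves $\mathcal{C}_{k-1}^{(n)}$. This is the case $C_2=C_1^\dagger$ of the statement at level $k-1$, so we strengthen the induction hypothesis to cover arbitrary left and right Clifford factors. The base case $k=1$ holds because the Pauli group is normal in the Clifford group. Hence $C_1PC_2=(C_1PC_1^\dagger)(C_1C_2)$, a Pauli times a Clifford, and that product lies in $\mathcal{C}_1$ exactly when $C_1C_2$ is Pauli up to phase. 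This shows the claim fails as stated for $k=1$ with arbitrary Cliffords. We therefore state and prove it for $k\ge2$, where the base case $k=2$ is the group property.

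\emph{Remaining items.} The generating set for the Clifford group (item 4) is standard. We prove it by reducing to the symplectic group $\mathrm{Sp}(2n,\mathbb{F}_2)$ via the map $\mathcal{C}_2/\langle\mathcal{P}_n,\text{phases}\rangle\cong\mathrm{Sp}(2n,\mathbb{F}_2)$, and we cite~\cite{GottesmanChuang1999}. Item 5 is the Solovay--Kitaev theorem~\cite{dawson2005solovay,nielsen2010quantum}.

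Item 6 we prove by induction on $k$.
\begin{itemize}
\item[] Base case: $Z^{1/2^0}=Z$, which is Pauli.
\item[] Inductive step: $Z^{1/2^k}$ commutes with $Z$. Conjugation gives $Z^{1/2^k}XZ^{-1/2^k}=X\,\mathrm{diag}(e^{\ii\pi/2^k},e^{-\ii\pi/2^k})$, which equals $e^{\ii\pi/2^k}X(Z^{1/2^{k-1}})^{\dagger}$ up to phase conventions. This is a Pauli times the inverse of a level-$k$ phase gate. Likewise $Y$ is sent to $Y$ times the same diagonal factor.
\end{itemize}
The extended induction hypothesis therefore needs to be ``$Z^{\pm1/2^{j}}$ times a Pauli, up to a global phase, lies in $\mathcal{C}_{j+1}$''. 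We prove this simultaneously, using that global phases are harmless at levels $\ge2$.
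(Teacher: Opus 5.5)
The paper gives no proof of this proposition. It is stated as standard background, with (5) and (6) resting only on citations and the other items taken as known. Your proposal is therefore a genuinely different, more self-contained route. You give elementary inductions straight from Definition~\ref{def:clifford-hierarchy} for (1), for the group part of (2), and for (3) and (6). You defer only (4), (5) and the non-closure claim to the literature. These inductions are correct, and they turn up something concrete: (3) is false as stated at $k=1$. For example, $U=I$, $C_1=H$, $C_2=I$ gives $H\notin\mathcal{P}_n$. This does not damage the paper, because (3) is only invoked at levels $\ge 2$ or with Pauli factors (as in the proof of Theorem~\ref{thm:AW-necessary}), and Pauli factors do preserve $\mathcal{C}_1$. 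Your induction for (3) is valid: the base case $k=2$ is the group property, and the step uses the level-$(k-1)$ statement with $C_2=C_1^\dagger$. It reads more cleanly if you first prove that Clifford conjugation preserves every $\mathcal{C}_j$ with $j\ge1$ (base case: normality of $\mathcal{P}_n$), and then deduce two-sided invariance for $k\ge 2$.

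Three loose ends remain.

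(i) You never exhibit the non-closure example for $k\ge3$. A quick one: $T$ and $HTH$ both lie in $\mathcal{C}_3^{(1)}$ by (6) and (3). A short computation gives $V:=(THTH)Z(THTH)^\dagger=\tfrac{1}{\sqrt2}Z+\tfrac12 X-\tfrac12 Y$. Then $VZV^\dagger=\tfrac{1}{\sqrt2}(X-Y)$ is not a Pauli, so $V\notin\mathcal{C}_2$ and $THTH\notin\mathcal{C}_3$. A level-3 witness only shows that $\mathcal{C}_3$ is not closed; for a particular $k>3$ you need a product that leaves $\mathcal{C}_k$ itself.

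(ii) Under the paper's definition, $\mathcal{C}_2^{(n)}$ contains $e^{\ii\theta}I$ for every $\theta$, while $\langle S,H,\mathrm{CNOT}\rangle$ is countable. So (4) can only hold modulo global phases. Your quotient by phases uses this implicitly; you should say it explicitly. The substantive content, that the images of $H,S,\mathrm{CNOT}$ generate $\mathrm{Sp}(2n,\mathbb{F}_2)$, is cited rather than proved, just as in the paper.

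(iii) In (6), your strengthened hypothesis ``up to a global phase'' is false at $j=0$ for arbitrary phases, since $\mathcal{C}_1$ only admits phases in $\{\pm1,\pm\ii\}$. The phases that actually arise in the step proving $S\in\mathcal{C}_2$ do lie in $\{\pm1,\pm\ii\}$ (indeed $SXS^\dagger=Y$ exactly), so the argument survives. You can also drop the strengthening entirely. Since $Z^{-1/2^{k-1}}=e^{-\ii\pi/2^{k-1}}XZ^{1/2^{k-1}}X$, item (3) together with phase-insensitivity at levels $\ge 2$ already puts the inverse in $\mathcal{C}_k$. Finally, your formula $Z^{1/2^k}XZ^{-1/2^k}=e^{\ii\pi/2^k}X(Z^{1/2^{k-1}})^\dagger$ is exact, not merely correct ``up to phase conventions''.
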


\begin{definition}[Controlled unitary]
Let $n\ge 1$ and let $U\in U(2^n)$ be an $n$-qubit unitary. The controlled-$U$ gate is the $(n+1)$-qubit unitary
\begin{equation}
  CU:=\ket{0}\!\bra{0}\otimes I_{2^n}+\ket{1}\!\bra{1}\otimes U\in U(2^{n+1}),
\end{equation}
where the first qubit is the control. Equivalently, in the computational basis ordered as $\{\ket{0}\otimes\ket{x},\ket{1}\otimes\ket{x}\}_{x\in\{0,1\}^n}$, $CU$ has the block-diagonal form
\begin{equation}
  CU=\begin{pmatrix}
    I_{2^n} & 0 \\
    0 & U
  \end{pmatrix}.
\end{equation}
\end{definition}

\begin{proposition}\label{prop:cpclif}
  Let $P\in \mathcal{P}_n$ be an $n$-qubit Pauli, then $CP\in \mathcal{C}_2^{(n+1)}$.  
\end{proposition}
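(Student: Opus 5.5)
To show $CP\in\mathcal{C}_2^{(n+1)}$, it suffices by Definition~\ref{def:clifford-hierarchy} to check that conjugating each generator of $\mathcal{P}_{n+1}$ by $CP$ gives a Pauli operator. Conjugation is a group homomorphism, so checking the generators $\ii I$, the single-qubit Paulis on the control, and $\mathcal{P}_n$ on the targets is enough. Before this, I will strip off the phase. Write $P=\omega P'$ with $\omega\in\{\pm1,\pm\ii\}$ and $P'$ a Hermitian Pauli string, so that $(P')^2=I$. Then
\begin{equation}
CP=(\ket{0}\!\bra{0}\otimes I+\ket{1}\!\bra{1}\otimes I\cdot\omega)\,CP'=(\mathrm{diag}(1,\omega)\otimes I)\,CP'.
\end{equation}
Here $\mathrm{diag}(1,\omega)$ is one of $I,Z,S,S^\dagger$, all of which are Clifford by Proposition~\ref{prop:CH-basic}(4). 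By the group property, Proposition~\ref{prop:CH-basic}(2), it therefore suffices to treat $CP'$.

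For $CP'$ I have two routes. The first is a circuit argument. Since $P'=\bigotimes_j P_j$ with $P_j\in\{I,X,Y,Z\}$, we have $CP'=\prod_j C P_j$ with the control shared, because the factors commute. Each $CX$ is a CNOT, $CZ=(I\otimes H)\,\mathrm{CNOT}\,(I\otimes H)$, and $CY=(I\otimes S)\,\mathrm{CNOT}\,(I\otimes S^\dagger)$. So $CP'$ is a product of the generators in Proposition~\ref{prop:CH-basic}(4) and is Clifford.

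The second route is a direct verification, which I will include as a check. Let $Q\in\mathcal{P}_n$ act on the targets. If $Q$ commutes with $P'$, then $CP'$ commutes with $I\otimes Q$. If $Q$ anticommutes with $P'$, then
\begin{equation}
CP'(I\otimes Q)CP'^\dagger=\ket{0}\!\bra{0}\otimes Q+\ket{1}\!\bra{1}\otimes P'QP'=Z\otimes Q.
\end{equation}
On the control, $Z\otimes I$ commutes with $CP'$, and
\begin{equation}
CP'(X\otimes I)CP'^\dagger=\ket{0}\!\bra{1}\otimes P'^\dagger+\ket{1}\!\bra{0}\otimes P'=X\otimes P',
\end{equation}
using $P'^\dagger=P'$.

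No step is a real obstacle. The only delicate point is the global phase $\omega$: without Hermiticity the conjugate of $X\otimes I$ would be $\ket{0}\!\bra{1}\otimes P^\dagger+\ket{1}\!\bra{0}\otimes P$, which is not of tensor-product form until the phase is factored into the $\mathrm{diag}(1,\omega)$ gate as above.
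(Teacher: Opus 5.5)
Your argument is correct. The paper gives no separate proof of this proposition. Its only justification appears inside the proof of Theorem~\ref{thm:surti}, where $C(U^2)$ is declared Clifford because it is a ``multi-controlled Pauli phase/bit-flip''. That is essentially your first route: a relative phase on the control times a fan-out of $CX$, $CY$, $CZ$ gates sharing the control. Your second route, conjugating the generators of $\mathcal{P}_{n+1}$ directly, is a more self-contained check. It avoids the generating set and also tells you exactly which Pauli each generator maps to.

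One remark in your last paragraph is wrong: the phase stripping is not needed for the direct check. Writing $P=\omega P'$, one has $\ket{0}\!\bra{1}\otimes P^\dagger+\ket{1}\!\bra{0}\otimes P=(\bar\omega\ket{0}\!\bra{1}+\omega\ket{1}\!\bra{0})\otimes P'$. The control factor here is $\pm X$ or $\pm Y$, so the result is still a Pauli in tensor-product form. Likewise $PQP^\dagger=P'QP'$, so the conjugation argument goes through verbatim for any $P\in\mathcal{P}_n$. The phase matters only in the circuit route. There $C(\omega I)=\mathrm{diag}(1,\omega)\otimes I$ is a nontrivial relative-phase gate ($I$, $Z$, $S$ or $S^\dagger$ on the control), and it must be accounted for, as you did.
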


\subsection{The controlled jump}

Our main study concerns a basic but subtle question: given an $n$-qubit unitary $U$, when does adding a control promote it to a higher level of the Clifford hierarchy? Concretely, we seek how the hierarchy level of $CU$ relates to that of $U$.

A general necessary condition for $CU$ to be in the Clifford hierarchy was proved by Anderson and Weippert~\cite{AndersonWeippert2024}.
\begin{theorem}[\cite{AndersonWeippert2024} Corollary 2.4.1]\label{thm:AW-necessary}
  A controlled unitary $CU$ is in $\mathcal{CH}$ only if $U\in \mathcal{CH}$ and $U^{2^m}=P$ for some $m\in \mathbb{N}$ and $P\in \mathcal{P}_n$ is some $n$-qubit Pauli.
\end{theorem}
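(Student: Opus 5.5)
The plan is to prove both conditions by induction on the hierarchy level, conjugating Paulis through $CU$ and reading off the resulting operators block by block. Write an $(n+1)$-qubit Pauli as $A\otimes P$ with $A\in\{I,X,Y,Z\}$ on the control and $P\in\mathcal{P}_n$. Then
\begin{equation}
CU\,(Z\otimes P)\,CU^\dagger=\ket0\!\bra0\otimes P-\ket1\!\bra1\otimes UPU^\dagger,
\end{equation}
and similarly with $I$ in place of $Z$. For the off-diagonal case,
\begin{equation}
CU\,(X\otimes P)\,CU^\dagger=\ket0\!\bra1\otimes PU^\dagger+\ket1\!\bra0\otimes UP .
\end{equation}
Multiplying the latter on the right by the Pauli $X\otimes P^{-1}$ (up to phase) gives $\ket0\!\bra0\otimes PU^\dagger P^{-1}+\ket1\!\bra1\otimes U$. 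This operator is a product of controlled unitaries on the two branches, namely $C_{\bar 0}(PU^\dagger P^\dagger)\cdot CU$.

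\textbf{Key lemma.} A block-diagonal unitary $\mathrm{diag}(V_0,V_1)$ has the form $\ket0\!\bra0\otimes V_0+\ket1\!\bra1\otimes V_1=(I\otimes V_0)\,C(V_0^\dagger V_1)$. So by Clifford invariance (Proposition~\ref{prop:CH-basic}(3)), whenever $V_0$ is Clifford its hierarchy level equals that of $C(V_0^\dagger V_1)$.

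\textbf{Necessity (the theorem as stated).} Suppose $CU\in\mathcal{C}_k$. We will show by induction on $k$ that $U$ lies in the hierarchy and that some $U^{2^m}$ is Pauli. For $k=1$, $CU$ is Pauli, so $U$ is Pauli up to phase and $U^2\propto I$. For the inductive step, take the conjugate of $X\otimes I$:
\begin{equation}
\ket0\!\bra1\otimes U^\dagger+\ket1\!\bra0\otimes U\in\mathcal{C}_{k-1}.
\end{equation}
Multiplying by the Pauli $X\otimes I$ does not preserve levels in general, because $\mathcal{C}_{k-1}$ is only closed under Clifford multiplication, and a Pauli is Clifford. Hence $\mathrm{diag}(U^\dagger,U)\in\mathcal{C}_{k-1}$. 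This operator equals $\mathrm{diag}(U^\dagger,U^\dagger)\,C(U^2)$.

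\textbf{Main obstacle.} To strip off $I\otimes U^\dagger$ I need $U$ to be Clifford, which is not yet known. I would first establish that $U\in\mathcal{CH}$ directly. The conjugate of $I\otimes P$ is $\mathrm{diag}(P,UPU^\dagger)$, and it lies in $\mathcal{C}_{k-1}$. Conjugating this operator by $\ket1\!\bra1$-projected Paulis, i.e.\ taking the $\ket1\!\bra1$ block through further Pauli conjugations of the form $Z\otimes Q$, shows inductively that $UPU^\dagger$ restricted to the $\ket1$ branch obeys the level-$(k-1)$ recursion. Hence $U\in\mathcal{C}_{k}^{(n)}$.

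If a cleaner route is needed, I would instead run the induction on the statement ``$\mathrm{diag}(V_0,V_1)\in\mathcal{C}_j$ with $V_0$ Clifford implies $C(V_0^\dagger V_1)\in\mathcal{C}_j$''. The step where the unitary is replaced by its square, $U\mapsto U^2$, lowers the level by one each time, so after at most $k-1$ squarings we reach level $1$. At that point $C(U^{2^{k-1}})$ is Pauli, which forces $U^{2^{k-1}}$ to be Pauli up to phase. This gives $m\le k-1$.

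The genuine difficulty is controlling the non-Clifford left factor at each squaring step. I expect to handle it by proving the joint statement ``$U$ is in $\mathcal{C}_j$ and $C(U^{2^i})\in\mathcal{C}_{j-i}$'' simultaneously.
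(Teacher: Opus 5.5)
Your reduction up to $\mathrm{diag}(U^\dagger,U)\in\mathcal{C}_{k-1}$ is correct. However, there is a genuine gap exactly at the step you flag, and neither of your proposed remedies closes it.

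Writing $\mathrm{diag}(U^\dagger,U)=(I\otimes U^\dagger)\,C(U^2)$ and concluding $C(U^2)\in\mathcal{C}_{k-1}$ uses Clifford invariance with respect to the factor $I\otimes U^\dagger$, so it requires $U$ to be Clifford. Proving $U\in\mathcal{CH}$ first does not help. That step itself is fine: it is Lemma~\ref{lemma:blockdiagonal} applied to $CU=\mathrm{diag}(I,U)$, where you conjugate $I\otimes Q$ by the block-diagonal operator, not the other way round. The problem is that for $k\ge 3$ the target can be genuinely non-Clifford (e.g.\ $U=T$, with $CT\in\mathcal{C}_4$). The levels $\mathcal{C}_j$ with $j\ge 3$ are not closed under multiplication by such elements, so nothing lets you cancel $U^\dagger$.

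Your ``cleaner'' induction hypothesis is just Clifford invariance, since it already assumes $V_0$ is Clifford. The joint statement ``$U\in\mathcal{C}_j$ and $C(U^{2^i})\in\mathcal{C}_{j-i}$'' comes with no mechanism for removing $U^\dagger$ either. So the squaring step $C(U^{2^i})\in\mathcal{C}_{k-i}$ remains unproved for non-Clifford $U$. Separately, your sentence saying that multiplying by the Pauli $X\otimes I$ ``does not preserve levels'' contradicts the conclusion you draw from it; it does preserve them.

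The missing idea is that you never need to return to a controlled gate. Keep the symmetric block-diagonal operator $W_s:=\mathrm{diag}(U^{-s},U^{s})$ and apply to it the same manipulation you applied to $CU$. A direct computation gives
\[
W_s\,(X\otimes I)\,W_s^\dagger\,(X\otimes I)=\mathrm{diag}(U^{-2s},U^{2s})=W_{2s}.
\]
If $W_s\in\mathcal{C}_j$, then $W_s(X\otimes I)W_s^\dagger\in\mathcal{C}_{j-1}$ by definition. Right multiplication by the Pauli $X\otimes I$ keeps it in $\mathcal{C}_{j-1}$, so $W_{2s}\in\mathcal{C}_{j-1}$. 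Only a Pauli is ever multiplied in, so no assumption on $U$ is needed.

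Starting from $W_1\in\mathcal{C}_{k-1}$, which you already have, you get $W_{2^i}\in\mathcal{C}_{k-1-i}$. Hence $W_{2^{k-2}}$ is an $(n+1)$-qubit Pauli. A block-diagonal Pauli has control factor $I$ or $Z$ up to phase, so $U^{2^{k-2}}$ is Pauli up to phase. This is the paper's proof: its $W_m$ are these operators, up to swapping the two blocks. It also yields the sharper bound $m\le k-2$, which the strictness step of Theorem~\ref{thm:controlled-jump-criterion} later relies on.
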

\noindent The proof relies on a few elementary identities for controlled unitaries, which we record here for later use.
\begin{lemma}[Properties of controlled gates]
Let $U\in U(2^n)$ be a unitary on the $n$-qubit target register.
\begin{enumerate}[(1)]
  \item (Distribution law of controlled gates) Let $V\in U(2^n)$. Then
  \begin{equation}\label{eq:cdistribution}
    C(UV)=CU\, CV.
  \end{equation}
  Equivalently, in circuit form (with gates applied from left to right),
  \begin{equation}
    \begin{quantikz}[column sep=1.2em, row sep=1.0em]
      & \qw & \ctrl{1} & \qw \\
      & \qwbundle{n} & \gate{UV} & \qw
    \end{quantikz}
    =
    \begin{quantikz}[column sep=1.2em, row sep=1.0em]
      & \qw & \ctrl{1} & \ctrl{1} & \qw \\
      & \qwbundle{n} & \gate{V} & \gate{U} & \qw
    \end{quantikz}.
  \end{equation}
  \item (Control-$X$ conjugation) Let $X$ denote the Pauli-$X$ on the control qubit. Then
  \begin{equation}\label{eq:xcu}
    CU (X\otimes I_{2^n})
    = (I_2\otimes U^\dagger)\,C(U^2)\,(X\otimes I_{2^n})\,CU.
  \end{equation}
  This identity can be visualized as the circuit equivalence
  \begin{equation}
    \begin{quantikz}[column sep=1.2em, row sep=1.0em]
      & \qw & \gate{X} & \ctrl{1} & \qw \\
      & \qwbundle{n} & \qw & \gate{U} & \qw
    \end{quantikz}
    =
    \begin{quantikz}[column sep=1.2em, row sep=1.0em]
      & \qw & \ctrl{1} & \gate{X} & \ctrl{1} & \qw & \qw \\
      & \qwbundle{n} & \gate{U} & \qw & \gate{U^{2}} & \gate{U^{\dagger}} & \qw
    \end{quantikz}.
  \end{equation}

  \item (Pushing a target unitary through a control) For any $V\in U(2^n)$ acting on the target register,
  \begin{equation}\label{eq:vcu}
    CU(I_2\otimes V) = C(UVU^\dagger V^\dagger)\,(I_2\otimes V) \,CU.
  \end{equation}
  Equivalently, in the circuit diagram, we have
  \begin{equation}
    \begin{quantikz}[column sep=1.2em, row sep=1.0em]
      & \qw & \qw & \ctrl{1} & \qw \\
      & \qwbundle{n} & \gate{V} & \gate{U} & \qw
    \end{quantikz}
    =
    \begin{quantikz}[column sep=1.2em, row sep=1.0em]
      & \qw & \ctrl{1} & \qw & \ctrl{1} & \qw \\
      & \qwbundle{n} & \gate{U} & \gate{V} & \gate{UVU^{\dagger}V^{\dagger}} & \qw
    \end{quantikz}.
  \end{equation}
\end{enumerate}
\end{lemma}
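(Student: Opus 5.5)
All three identities are equalities between explicit $(n+1)$-qubit operators. I will prove each one by writing every factor in the block form of the definition of the controlled unitary, with respect to the splitting $\ket{0}\otimes\mathbb{C}^{2^n}\oplus\ket{1}\otimes\mathbb{C}^{2^n}$, and multiplying $2\times 2$ block matrices. In this form a controlled gate is $CW=\mathrm{diag}(I,W)$, a target-only gate is $I_2\otimes V=\mathrm{diag}(V,V)$, and the control flip is
\begin{equation}
X\otimes I_{2^n}=\begin{pmatrix}0&I\\ I&0\end{pmatrix}.
\end{equation}
No earlier result is needed beyond these forms.

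For (1), block-diagonal matrices multiply entrywise on the diagonal, so $CU\,CV=\mathrm{diag}(I,U)\,\mathrm{diag}(I,V)=\mathrm{diag}(I,UV)=C(UV)$. The circuit picture follows from the convention that operators act right-to-left while circuits are read left-to-right. Hence $CV$ is drawn first and $CU$ second.

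For (2), the left-hand side is
\begin{equation}
CU\,(X\otimes I)=\begin{pmatrix}0&I\\ U&0\end{pmatrix}.
\end{equation}
I evaluate the right-hand side from the right:
\begin{equation}
(X\otimes I)\,CU=\begin{pmatrix}0&U\\ I&0\end{pmatrix},\qquad
C(U^2)\,(X\otimes I)\,CU=\begin{pmatrix}0&U\\ U^2&0\end{pmatrix}.
\end{equation}
Left-multiplying by $I_2\otimes U^\dagger=\mathrm{diag}(U^\dagger,U^\dagger)$ gives $\begin{pmatrix}0&I\\ U&0\end{pmatrix}$, which matches the left-hand side. The one point needing care is using $U^\dagger U=I$ in the top-right block and $U^\dagger U^2=U$ in the bottom-left block.

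For (3), the left-hand side is $CU\,(I_2\otimes V)=\mathrm{diag}(V,UV)$. On the right, $(I_2\otimes V)\,CU=\mathrm{diag}(V,VU)$, and left-multiplying by $C(UVU^\dagger V^\dagger)=\mathrm{diag}(I,UVU^\dagger V^\dagger)$ gives $\mathrm{diag}(V,\,UVU^\dagger V^\dagger VU)=\mathrm{diag}(V,UV)$. This uses only unitarity of $U$ and $V$.

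I do not expect a genuine obstacle. The only likely source of error is bookkeeping: keeping the operator product order consistent with the left-to-right circuit diagrams, and placing $U^\dagger$ correctly in (2). I will therefore check each block entry explicitly rather than argue at the circuit level.
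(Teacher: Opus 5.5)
Your block-matrix verification is correct and matches the paper's approach: the paper gives no separate argument and simply asserts that the three identities "can be verified straightforwardly." Your explicit block products, including $U^\dagger U^2=U$ in (2) and $UVU^\dagger V^\dagger VU=UV$ in (3), together with your ordering convention between operator products and left-to-right circuits, are exactly that direct check.
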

The relations can be verified straightforwardly.

\begin{lemma}[Controlled-block-diagonal unitary]\label{lemma:blockdiagonal}
  Let $A,B\in U(2^n)$. For every $r\ge 1$, if a controlled-block-diagonal unitary
  \begin{equation}
    D=\ket{0}\!\bra{0}\otimes A+\ket{1}\!\bra{1}\otimes B
  \end{equation}
  lies in $\mathcal{C}_r^{(n+1)}$, then both $A$ and $B$ lie in $\mathcal{C}_r^{(n)}$. 
\end{lemma}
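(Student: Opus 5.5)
We argue by induction on $r$, proving the statement for all $n$ and all pairs $A,B$ simultaneously.

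\emph{Base case $r=1$.} Suppose $D\in\mathcal{P}_{n+1}$. Write $D=\omega\,Q\otimes R$ with $Q\in\{I,X,Y,Z\}$ and $R$ an $n$-qubit Pauli string. Since $D$ is block diagonal with respect to the control qubit, $Q$ must be $I$ or $Z$. In either case $A=\omega R$ and $B=\pm\omega R$, so both $A$ and $B$ lie in $\mathcal{P}_n$.

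\emph{Inductive step.} Assume the claim holds at level $r$, and let $D\in\mathcal{C}_{r+1}^{(n+1)}$. Fix an arbitrary $P\in\mathcal{P}_n$. The operator $I_2\otimes P$ lies in $\mathcal{P}_{n+1}$, so $D(I_2\otimes P)D^\dagger\in\mathcal{C}_r^{(n+1)}$. A direct computation gives
\begin{equation}
  D(I_2\otimes P)D^\dagger=\ket{0}\!\bra{0}\otimes APA^\dagger+\ket{1}\!\bra{1}\otimes BPB^\dagger .
\end{equation}
This is again controlled-block-diagonal. The induction hypothesis, applied with $APA^\dagger$ and $BPB^\dagger$ as the two blocks, gives $APA^\dagger\in\mathcal{C}_r^{(n)}$ and $BPB^\dagger\in\mathcal{C}_r^{(n)}$. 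Since $P\in\mathcal{P}_n$ was arbitrary, Definition~\ref{def:clifford-hierarchy} yields $A,B\in\mathcal{C}_{r+1}^{(n)}$.

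The main point to get right is the quantifier order in the induction hypothesis. It must range over all pairs of $n$-qubit unitaries, not only the original $A,B$, because the step applies it to the new blocks $APA^\dagger, BPB^\dagger$. The only other place needing care is the base case, where block-diagonality forces the control factor to be $I$ or $Z$. No conjugations by $X$ or $Y$ on the control are needed: conjugating $D$ by Paulis of the form $I_2\otimes P$ alone is enough to certify membership of each block.
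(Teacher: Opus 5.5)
Your proof is correct and follows the same route as the paper: conjugate $D$ by $I_2\otimes P$, note that the result is again controlled-block-diagonal with blocks $APA^\dagger$ and $BPB^\dagger$, and descend level by level. Your version makes the paper's informal ``apply the same reasoning recursively'' rigorous in two places the paper leaves implicit: the induction hypothesis quantifies over all pairs of blocks, and the base case checks that a block-diagonal Pauli must have control factor $I$ or $Z$.
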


\begin{proof}
  For any $P\in\mathcal{P}_n$ one has
  \begin{equation}
    D\,(I_2\otimes P)\,D^{\dagger}=\ket{0}\!\bra{0}\otimes (APA^{\dagger})+\ket{1}\!\bra{1}\otimes (BPB^{\dagger}),
  \end{equation}
  which lies in $\mathcal{C}_{r-1}^{(n+1)}$ by definition. Applying the same reasoning recursively for $r-1,r-2,\ldots,1$ shows that for each $P$ the operators $APA^{\dagger}$ and $BPB^{\dagger}$ eventually land in $\mathcal{P}_n$ after $r-1$ steps, i.e., $A,B\in\mathcal{CH}$. More precisely, $A,B\in\mathcal{C}_r^{(n)}$.
\end{proof}

\begin{proof}[Proof of Theorem~\ref{thm:AW-necessary}]
  We argue by iterating the commutator-type identity~\eqref{eq:xcu}. Fix $k\ge 1$ and suppose that $CU\in \mathcal{C}_{k}^{(n+1)}$.

  Let $X$ denote the Pauli-$X$ on the control qubit and set
  \begin{equation}
    W_1 :=(X\otimes I_{2^n})\, CU\,(X\otimes I_{2^n})\,CU^\dagger.
  \end{equation}
  By Definition~\ref{def:clifford-hierarchy}, conjugation by an element of $\mathcal{C}_{k}^{(n+1)}$ maps Paulis to $\mathcal{C}_{k-1}^{(n+1)}$. Hence $W_1\in \mathcal{C}_{k-1}^{(n+1)}$. More generally, define recursively for $m\ge 1$
  \begin{equation}
    W_{m+1}:=(X\otimes I_{2^n})\,W_m\,(X\otimes I_{2^n})\,W_m^\dagger.
  \end{equation}
  Using again the defining property of the hierarchy together with the Clifford-invariance property (Proposition~\ref{prop:CH-basic}(3)), we obtain the level bound
  \begin{equation}
    W_m\in \mathcal{C}_{k-m}^{(n+1)}\qquad \text{for all }1\le m\le k-1.
  \end{equation}

  On the other hand, Eq.~\eqref{eq:xcu} implies by a straightforward induction that $W_m$ is diagonal in the control qubit and equals
  \begin{equation}\label{eq:wm}
    W_m=
    \begin{pmatrix}
      U^{-2^{m-1}} & 0\\
      0 & U^{2^{m-1}}
    \end{pmatrix}.
  \end{equation}
  In particular, taking $m=k-1$ gives $W_{k-1}\in \mathcal{C}_1^{(n+1)}=\mathcal{P}_{n+1}$. Since $W_{k-1}$ has the above block-diagonal form, this forces $U^{2^{k-2}}$ to be an $n$-qubit Pauli (up to an overall phase), i.e., $U^{2^{k-2}}=\pm U^{2^{k-2}}=P$ for some $P\in \mathcal{P}_n$.

  It remains to justify the condition $U\in\mathcal{CH}$. Fix any target Pauli $P\in\mathcal{P}_n$. Since $I_2\otimes P\in\mathcal{P}_{n+1}$ and $CU\in\mathcal{C}_k^{(n+1)}$, we have
  \begin{equation}\label{eq:cu-conj-target-pauli}
    CU\,(I_2\otimes P)\,CU^{\dagger}\in\mathcal{C}_{k-1}^{(n+1)}.
  \end{equation}
  But $CU$ is block diagonal in the control qubit, so the conjugate is explicitly
  \begin{equation}
    CU\,(I_2\otimes P)\,CU^{\dagger}=\ket{0}\!\bra{0}\otimes P+\ket{1}\!\bra{1}\otimes (UPU^{\dagger}).
  \end{equation}
  Applying Lemma~\ref{lemma:blockdiagonal} to the block-diagonal operator in Eq.~\eqref{eq:cu-conj-target-pauli} yields $UPU^{\dagger}\in\mathcal{C}_{k-1}^{(n)}$ for all Paulis $P\in\mathcal{P}_n$. Therefore $U\in\mathcal{C}_k^{(n)}\subseteq\mathcal{CH}$.
\end{proof}

While Theorem~\ref{thm:AW-necessary} constrains when $CU$ can lie in $\mathcal{CH}$, it does not by itself identify the hierarchy level of $CU$. Indeed, the above argument only shows that some power $U^{2^{m}}$ must eventually fall into the Pauli group; because the hierarchy is nested, this can happen strictly before the iteration reaches $m=k-1$, and therefore the bound obtained from the proof need not be tight.

A complementary characterization in the Clifford case $U\in\mathcal{C}_2^{(n)}$ was obtained by Surti, Daguerre, and Kim~\cite{surti2025efficient}:
\begin{theorem}[\cite{surti2025efficient} Lemma 1]\label{thm:surti}
  Let $U\in \mathcal{C}_2^{(n)}$ be a Clifford unitary. Then $CU\in \mathcal{C}_3^{(n+1)}$ if and only if $U^2=P$ for some Pauli $P\in \mathcal{P}_n$.
\end{theorem}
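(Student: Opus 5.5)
The plan is to prove the two directions separately. Necessity comes from the iteration in the proof of Theorem~\ref{thm:AW-necessary}; sufficiency comes from checking the conjugates of a generating set of Paulis directly.

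\emph{Necessity.} Suppose $CU\in\mathcal{C}_3^{(n+1)}$. Running the argument of Theorem~\ref{thm:AW-necessary} with $k=3$, the operator
\begin{equation*}
W_1=(X\otimes I)\,CU\,(X\otimes I)\,CU^\dagger
\end{equation*}
lies in $\mathcal{C}_2^{(n+1)}$. The recursion with $m=2$ then puts
\begin{equation*}
W_2=\mathrm{diag}\bigl(U^{-2},U^{2}\bigr)
\end{equation*}
in $\mathcal{C}_1^{(n+1)}$. A Pauli that is block diagonal in the control qubit must have the form $I\otimes Q$ or $Z\otimes Q$. Hence $U^2$ is proportional to a Pauli $Q$.

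The conclusion has to hold with the paper's phase convention, so one must check that the scalar lies in $\{\pm1,\pm\ii\}$ or else absorb it. I would note that the statement is only meaningful up to phase, because the hierarchy is insensitive to global phases on the target only when the phase is a Pauli phase. Inspecting the block form $\mathrm{diag}(\lambda^{-1}Q^{-1},\lambda Q)$ shows that $\lambda^2$ is a power of $\ii$. That suffices if we read ``$U^2=P$'' up to phase, as the abstract does.

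\emph{Sufficiency.} Assume $U$ is Clifford and $U^2=P$ is Pauli, up to phase. Since $\mathcal{C}_3$ is defined by conjugates of all Paulis, and products of Paulis only change $CU\,\sigma\,CU^\dagger$ multiplicatively, I would check generators and argue closure. Closure is delicate because $\mathcal{C}_2$ is a group, but the conjugates of a product are products of conjugates. So if each generator's conjugate is Clifford, then every Pauli's conjugate is Clifford.

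For a target Pauli $I\otimes Q$, the conjugate is
\begin{equation*}
\ket{0}\!\bra{0}\otimes Q+\ket{1}\!\bra{1}\otimes UQU^\dagger=(I\otimes Q)\,C\bigl(Q\,UQU^\dagger\bigr).
\end{equation*}
Here $Q\,UQU^\dagger$ is a Pauli, so by Proposition~\ref{prop:cpclif} this conjugate is Clifford. For the control $Z$, the conjugate is $Z\otimes I$ itself.

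For the control $X$, identity~\eqref{eq:xcu} gives
\begin{equation*}
CU\,(X\otimes I)\,CU^\dagger=(I\otimes U^\dagger)\,C(U^2)\,(X\otimes I).
\end{equation*}
This equals $(I\otimes U^\dagger)\,CP\,(X\otimes I)$, a product of Cliffords, using Proposition~\ref{prop:cpclif} again. A global phase on $P$ only turns $CP$ into $CP$ times a controlled phase. If $U^2=\omega P$ with $\omega\in\{\pm1,\pm\ii\}$, that controlled phase is $I$, $Z$ or $S$ on the control, which is still Clifford.

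The main obstacle is the phase bookkeeping in both directions. The scalar in $U^2$ must be a power of $\ii$ for the equivalence to hold exactly, since $\mathrm{diag}(1,e^{\ii\theta})$ is Clifford only for such $\theta$. I would settle this by normalizing $U$ by a phase so that $U^2$ is Hermitian or Pauli, noting that rephasing $U$ changes $CU$ only by a controlled phase on the control qubit.
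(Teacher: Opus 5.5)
Your route is the paper's. Necessity comes from the Anderson--Weippert iteration at $k=3$, which puts $W_2=\mathrm{diag}(U^{-2},U^{2})$ in $\mathcal{P}_{n+1}$. Sufficiency comes from checking that the conjugates of $Z\otimes I$, $I\otimes Q$ and $X\otimes I$ are Clifford, via \eqref{eq:xcu}, \eqref{eq:vcu} and Proposition~\ref{prop:cpclif}, with closure because $\mathcal{C}_2$ is a group. Your form $(I\otimes U^\dagger)\,C(U^2)\,(X\otimes I)$ of the $X$-conjugate is what \eqref{eq:xcu} literally gives. The paper's displayed rearrangement $(X\otimes I)(I\otimes U^{-2})C(U^2)$ is off by Clifford factors: for $U=X$ it yields $X\otimes I$ instead of $X\otimes X$. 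Both are Clifford$\cdot C(U^2)\cdot$Clifford, so the conclusion is unaffected.

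The part that does not hold up is your phase bookkeeping. From $W_2$ you extract only that $\lambda^2$ is a power of $\ii$, and then fall back on reading $U^2=P$ up to phase. Under that reading the equivalence is false. Take $U=e^{\ii\pi/8}I$: it is Clifford, and $U^2=e^{\ii\pi/4}I$, so even your condition $\lambda^2=\ii$ holds. Yet $CU=Z^{1/8}\otimes I\notin\mathcal{C}_3^{(n+1)}$.

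Your proposed repair, normalizing $U$ by a phase, is not a free move either. We have $C(e^{\ii\theta}U)=(\mathrm{diag}(1,e^{\ii\theta})\otimes I)\,CU$, and this control phase is non-Clifford unless $e^{\ii\theta}\in\{\pm1,\pm\ii\}$. So rephasing can change whether $CU\in\mathcal{C}_3$, as the same example shows.

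None of this is needed, because $\mathcal{C}_1^{(n+1)}=\mathcal{P}_{n+1}$ already pins the phase. Write $W_2=\omega\,\sigma\otimes R$ with $\omega\in\{\pm1,\pm\ii\}$ and $R$ a Pauli string. Block-diagonality forces $\sigma\in\{I,Z\}$, so $U^2=\pm\omega R\in\mathcal{P}_n$ exactly. In your notation, $\mathrm{diag}(\lambda^{-1},\lambda)$ must equal $\omega I$ or $\omega Z$. Either way $\lambda=\omega^{-1}$, so $\lambda\in\{\pm1,\pm\ii\}$ (equivalently $\lambda^2=\pm1$), not merely $\lambda^2$ a power of $\ii$. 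That is exactly the hypothesis your sufficiency direction uses, so with this one-line fix the two directions match the exact statement.

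A minor point: $(I\otimes Q)\,C(Q\,UQU^\dagger)$ equals the conjugate only when $Q^2=I$. Either restrict to Hermitian generators or use $Q^\dagger UQU^\dagger$, as in \eqref{eq:vcu}.
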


\begin{proof}
  ($CU\in\mathcal{C}_3^{(n+1)}\implies  U^2=P$) This follows from Theorem~\ref{thm:AW-necessary}.

  ($U^2=P\implies CU\in \mathcal{C}_3^{(n+1)}$) Assume $U\in\mathcal{C}_2^{(n)}$ and $U^2\in\mathcal{P}_n$. To show $CU\in\mathcal{C}_3^{(n+1)}$, it suffices by Definition~\ref{def:clifford-hierarchy} to verify that for every Pauli operator $Q\in\mathcal{P}_{n+1}$, the conjugate $CU\,Q\,CU^{\dagger}$ is a Clifford, i.e., lies in $\mathcal{C}_2^{(n+1)}$.

  Every $(n+1)$-qubit Pauli can be written (up to phase) as a product of a Pauli on the control and a Pauli on the target, so it is enough to check generators $X\otimes I_{2^n}$, $Z\otimes I_{2^n}$, and $I_2\otimes P$ with $P\in\mathcal{P}_n$.

  First, $Z\otimes I_{2^n}$ commutes with $CU$.

  Next, applying Eq.~\eqref{eq:xcu} and using $U^{-2}=(U^2)^{\dagger}\in\mathcal{P}_n$, we obtain
  \begin{equation}
    CU\,(X\otimes I_{2^n})\,CU^{\dagger}
    =(X\otimes I_{2^n})\,(I_2\otimes U^{-2})\,C(U^2).
  \end{equation}
  Since $U^2$ is Pauli, the controlled-Pauli $C(U^2)$ is a Clifford (it is a multi-controlled Pauli phase/bit-flip), and $I_2\otimes U^{-2}$ is also Pauli; therefore the right-hand side is a product of Cliffords and hence lies in $\mathcal{C}_2^{(n+1)}$.

  Finally, for any $P\in\mathcal{P}_n$, Eq.~\eqref{eq:vcu} gives
  \begin{equation}\label{eq:cui2p}
    CU\,(I_2\otimes P)\,CU^{\dagger}=C(UPU^{\dagger}P^{\dagger})\,(I_2\otimes P).
  \end{equation}
  Because $U$ is Clifford, $UPU^{\dagger}\in\mathcal{P}_n$, hence the commutator $UPU^{\dagger}P^{\dagger}$ is again Pauli, so $C(UPU^{\dagger}P^{\dagger})\in\mathcal{C}_2^{(n+1)}$ from Proposition~\ref{prop:cpclif}. Multiplying by $I_2\otimes P\in\mathcal{C}_1^{(n+1)}$ preserves membership in $\mathcal{C}_2^{(n+1)}$.

  This proves that $CU$ conjugates $\mathcal{P}_{n+1}$ into $\mathcal{C}_2^{(n+1)}$, and therefore $CU\in\mathcal{C}_3^{(n+1)}$.
\end{proof}

Our goal is to extend this result to controlled jumps by an arbitrary number of hierarchy levels and to determine the exact hierarchy level of the controlled gate. To do so, it is convenient to record the precise moment at which a unitary becomes Pauli under repeated squaring.

\begin{definition}[Pauli periodicity]\label{def:pauli-periodicity}
  Let $U\in U(2^n)$. We say that $U$ is $m$-Pauli-periodic if
  \begin{equation}
    U^{2^m}\in \mathcal{P}_n\qquad\text{and}\qquad U^{2^t}\notin\mathcal{P}_n\ \text{for all }t<m,
  \end{equation}
  where membership in $\mathcal{P}_n$ is understood up to an overall phase $\{\pm 1,\pm\ii\}$.
\end{definition}
Equivalently, we can define the Pauli periodicity as the least number of times that a unitary $U$ has to be squared in order to become a Pauli operator:
\begin{equation}
  m=\min\bigl\{t\ge 0:\; U^{2^t}\in\mathcal{P}_n\bigr\}.
\end{equation}
Note that, even though every Pauli matrix squares to the identity, the actual order of an $m$-Pauli-periodic unitary $U$ can be $\mathrm{ord}(U)=2^{m+q}$, where $q=0,1,2$ is the $\log_2$-periodicity of the Pauli operator $U^{2^m}$. This depends on whether the Pauli operator $U^{2^m}$ is identity, Pauli string with $\pm 1$ phase, or Pauli string with $\pm \ii$ phase, respectively.

With this notion, we can state the exact hierarchy level of the controlled unitary.
\begin{theorem}[Controlled jump criterion]\label{thm:controlled-jump-criterion}
  Let $U\in \mathcal{C}_2^{(n)}$ be a Clifford unitary. Then the controlled gate $CU$ lies strictly in level $m+2$ if $U$ has Pauli periodicity $m$. That is, $CU\in\mathcal{C}_{m+2}^{(n+1)}\setminus\mathcal{C}_{m+1}^{(n+1)}$.
\end{theorem}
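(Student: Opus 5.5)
The plan is to prove the two inclusions separately. For the upper bound $CU\in\mathcal{C}_{m+2}^{(n+1)}$ I would induct on $m$, modelled on the proof of Theorem~\ref{thm:surti}. For strictness, $CU\notin\mathcal{C}_{m+1}^{(n+1)}$, I would reuse the iterated-commutator argument from the proof of Theorem~\ref{thm:AW-necessary}.

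\emph{Upper bound.} The inductive statement is: for every Clifford $U$ with $U^{2^m}\in\mathcal{P}_n$ (up to phase), $CU\in\mathcal{C}_{m+2}^{(n+1)}$. The base cases are already available: $m=0$ is Proposition~\ref{prop:cpclif}, and $m=1$ is Theorem~\ref{thm:surti}.

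For the inductive step, write an arbitrary Pauli as $Q=\omega\,(X^aZ^b\otimes P)$ with $P\in\mathcal{P}_n$. Then
\begin{equation}
  CU\,Q\,CU^\dagger=\omega\,\bigl[CU(X^a\otimes I)CU^\dagger\bigr]\bigl[CU(Z^b\otimes P)CU^\dagger\bigr].
\end{equation}
Handling $Q$ through this factorization matters, because the levels $\mathcal{C}_k$ are not closed under products (Proposition~\ref{prop:CH-basic}(2)).

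The second factor equals $(Z^b\otimes I)\,C(UPU^\dagger P^\dagger)\,(I\otimes P)$, since $Z\otimes I$ commutes with $CU$ and by Eq.~\eqref{eq:vcu}. Because $U$ is Clifford, $UPU^\dagger P^\dagger$ is Pauli, so this factor is Clifford by Proposition~\ref{prop:cpclif}.

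The first factor, for $a=1$, is $(X\otimes I)(I\otimes U^{-2})\,C(U^2)$ by Eq.~\eqref{eq:xcu}. Here $V=U^2$ is Clifford and satisfies $V^{2^{m-1}}=U^{2^m}\in\mathcal{P}_n$. The induction hypothesis therefore gives $C(U^2)\in\mathcal{C}_{m+1}^{(n+1)}$.

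By Clifford invariance (Proposition~\ref{prop:CH-basic}(3)), the full product is in $\mathcal{C}_{m+1}^{(n+1)}$, and hence $CU\in\mathcal{C}_{m+2}^{(n+1)}$. The induction only needs "$U^{2^m}$ is Pauli", not minimality, so it applies directly to $U^2$.

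\emph{Strictness.} Suppose $m\ge1$ and $CU\in\mathcal{C}_{m+1}^{(n+1)}$. Run the proof of Theorem~\ref{thm:AW-necessary} with $k=m+1$. This gives $W_m\in\mathcal{C}_1^{(n+1)}$ with
\begin{equation}
  W_m=\mathrm{diag}\bigl(U^{-2^{m-1}},\,U^{2^{m-1}}\bigr).
\end{equation}
A Pauli of this controlled-block-diagonal form has each block equal to a Pauli up to phase. So $U^{2^{m-1}}\in\mathcal{P}_n$, contradicting minimality of $m$ in Definition~\ref{def:pauli-periodicity}.

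I expect the main obstacle to be making this last step rigorous. One has to check that Eq.~\eqref{eq:wm} holds exactly, including the induction via Eq.~\eqref{eq:xcu}, and that the level bound $W_j\in\mathcal{C}_{k-j}$ survives the Clifford-invariance steps. One also has to confirm that a block-diagonal $(n+1)$-qubit Pauli forces its blocks to be Paulis, which holds since the control factor must be $I$ or $Z$.

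The case $m=0$ needs separate care. It requires $CP\notin\mathcal{C}_1^{(n+1)}$, which fails for $U=\pm I$ (for example $C(-I)=Z\otimes I$). So I would treat $m=0$ as holding except for these degenerate phase multiples of the identity, or assume $m\ge1$, which is the regime of interest.
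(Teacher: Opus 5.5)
Your proposal is correct and follows the paper's proof essentially step for step. The upper bound is an induction on $m$: Eq.~\eqref{eq:xcu} reduces the $X$-type conjugate to $C(U^2)$ dressed by Cliffords, and Clifford invariance absorbs the dressing. Strictness comes from the iterated $W_m$ in the proof of Theorem~\ref{thm:AW-necessary}. Your weaker inductive hypothesis (only $U^{2^m}\in\mathcal{P}_n$) is sound, and so is your observation that strictness genuinely fails at $m=0$ for $U=\pm I$; the paper leaves both points implicit.

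There is one harmless slip, inherited from the proof of Theorem~\ref{thm:surti} and Eq.~\eqref{eq:conj-x}. What Eq.~\eqref{eq:xcu} actually gives is $CU\,(X\otimes I)\,CU^\dagger=(I\otimes U^\dagger)\,C(U^2)\,(X\otimes I)$, not $(X\otimes I)(I\otimes U^{-2})\,C(U^2)$. Since $C(U^2)$ is still sandwiched between Cliffords, your argument is unaffected.
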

\noindent Theorem~\ref{thm:controlled-jump-criterion} is a refinement of the necessary condition in Theorem~\ref{thm:AW-necessary} and a generalization of Theorem~\ref{thm:surti}: here we not only require that some power of $U$ lies in the Pauli group, but show that the minimal such power precisely determines the hierarchy level of $CU$.
As we will see in the proof below, the assumption $U\in\mathcal{C}_2^{(n)}$ is essential here, without which we cannot pinpoint the level of $CU$.

\begin{proof}
We prove the membership and strictness statements separately.

\smallskip
\noindent Step 1: $CU\in\mathcal{C}_{m+2}^{(n+1)}$.
By Definition~\ref{def:clifford-hierarchy}, it suffices to show that for every $(n+1)$-qubit Pauli $Q\in\mathcal{P}_{n+1}$, the conjugate $CU\,Q\,CU^{\dagger}$ lies in $\mathcal{C}_{m+1}^{(n+1)}$. We prove this via induction.

First of all, the statement is clearly true for $m=1$ from the Theorem~\ref{thm:surti}. Now assuming the statement is true up to $m-1$. That is, for every $V$ that has Pauli periodicity $m'<m$, $CV$ lies strictly in level $m'+2$. We now verify the claim for $CU$ where $U$ has Pauli periodicity $m$. To this end, we note that every $(n+1)$-qubit Pauli is (up to phase) one of
\begin{equation}
  Z\otimes I_{2^n},\qquad I_2\otimes P,\qquad X\otimes P,\qquad (XZ)\otimes P,
\end{equation}
with $P\in\mathcal{P}_n$. We handle these cases one by one.
\begin{itemize}
  \item $Z\otimes I_{2^n}$ commutes with $CU$.

  \item For $I_2\otimes P$, since $U$ is Clifford we have $UPU^{\dagger}\in\mathcal{P}_n$, and Eq.~\eqref{eq:cui2p} shows $CU\,(I_2\otimes P)\,CU^{\dagger}$ is a product of a controlled-Pauli and a Pauli; hence it is a Clifford, i.e., it lies in $\mathcal{C}_2^{(n+1)}\subseteq\mathcal{C}_{m+1}^{(n+1)}$.

  \item For $X\otimes I_{2^n}$, Eq.~\eqref{eq:xcu} implies
  \begin{equation}\label{eq:conj-x}
    CU\,(X\otimes I_{2^n})\,CU^{\dagger}
    =(X\otimes I_{2^n})\,(I_2\otimes U^{-2})\,C(U^2).
  \end{equation}
  Because $U$ is Clifford, $U^{-2}$ is also Clifford, so $(X\otimes I_{2^n})(I_2\otimes U^{-2})\in\mathcal{C}_2^{(n+1)}$ from group property of Cliffords. By Proposition~\ref{prop:CH-basic}(3), the hierarchy level of the right-hand side is the same as that of $C(U^2)$.

  Set $V:=U^2$. Since $U$ has Pauli periodicity $m$, $V$ has Pauli periodicity $m-1$. By the induction hypothesis applied to $V$, we have
  \begin{equation}
    CV\in\mathcal{C}_{(m-1)+2}^{(n+1)}=\mathcal{C}_{m+1}^{(n+1)}.
  \end{equation}
  Hence $C(U^2)=CV\in\mathcal{C}_{m+1}^{(n+1)}$, and therefore Eq.~\eqref{eq:conj-x} shows $CU\,(X\otimes I)\,CU^{\dagger}$ is in the level $m+1$.

  \item For $X\otimes P$, we write it as $X\otimes P=(I_2\otimes P)(X\otimes I_{2^n})$. Then, using Eq.~\eqref{eq:cdistribution}, we have
  \begin{equation}
    CU\,(X\otimes P)\,CU^{\dagger}=
    \bigl(CU\,(I_2\otimes P)\,CU^{\dagger}\bigr)
    \bigl(CU\,(X\otimes I_{2^n})\,CU^{\dagger}\bigr),
  \end{equation}
  which is a product of a Clifford gate (from Eq.~\eqref{eq:cui2p}) and an element of $\mathcal{C}_{m+1}^{(n+1)}$. Hence, from Proposition~\ref{prop:CH-basic}(3), it lies in $\mathcal{C}_{m+1}^{(n+1)}$ as well.

  \item The case $(XZ)\otimes P$ follows similarly since $CU$ commutes with $Z\otimes I$.
\end{itemize}
All of these results prove that $CU\in\mathcal{C}_{m+2}^{(n+1)}$.

\smallskip
\noindent Step 2: $CU\notin\mathcal{C}_{m+1}^{(n+1)}$.
If $CU\in\mathcal{C}_{m+1}^{(n+1)}$, then the iteration in the proof of Theorem~\ref{thm:AW-necessary} (with $k=m+1$) would imply that $U^{2^{m-1}}$ is Pauli (up to phase), contradicting the assumption that $U$ has Pauli periodicity $m$.

Combining the two steps yields
$CU\in\mathcal{C}_{m+2}^{(n+1)}\setminus\mathcal{C}_{m+1}^{(n+1)}$.
\end{proof}

The proof above crucially uses Clifford invariance (Proposition~\ref{prop:CH-basic}(3)) together with the fact that Cliffords conjugate Paulis to Paulis. For general $U\in\mathcal{CH}$ (not necessarily Clifford), these conditions will fail, and the same argument does not determine the exact level of $CU$. We will briefly discuss controlled jumps from non-Clifford inputs in Sec.~\ref{sec:jumpfromnonclif}.

%\todo{Higher controls. Can we make statements?}

\subsection{Properties of the Jumped Cliffords}
Motivated by Theorem~\ref{thm:controlled-jump-criterion}, we isolate the class of controlled gates obtained from controlling Pauli-periodic Cliffords.

\begin{definition}[Jumped Cliffords]
  A jumped Clifford is a controlled unitary $CU$ where the target unitary $U\in\mathcal{C}_2^{(n)}$ is Pauli-periodic. 
  %We denote the set of all such controlled gates by $\mathcal{J}$.
\end{definition}

\begin{proposition}[Inverse of jumped Cliffords]
  The inverse of every jumped Clifford is again a jumped Clifford in the same level of the Clifford hierarchy.
\end{proposition}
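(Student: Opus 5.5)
The plan is to show that $(CU)^{\dagger}=C(U^{\dagger})$, then check that $U^{\dagger}$ is a Pauli-periodic Clifford with the same Pauli periodicity as $U$, and finally invoke Theorem~\ref{thm:controlled-jump-criterion}.

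First, write $CU=\ket{0}\!\bra{0}\otimes I_{2^n}+\ket{1}\!\bra{1}\otimes U$. Taking the adjoint blockwise gives $(CU)^{\dagger}=\ket{0}\!\bra{0}\otimes I_{2^n}+\ket{1}\!\bra{1}\otimes U^{\dagger}=C(U^{\dagger})$. Equivalently, Eq.~\eqref{eq:cdistribution} gives $CU\,C(U^{\dagger})=C(I)=I$. So the inverse of a controlled gate is again a controlled gate, with target $U^{\dagger}$. Since $\mathcal{C}_2^{(n)}$ is a group (Proposition~\ref{prop:CH-basic}(2)), $U^{\dagger}$ is Clifford.

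Next, compare the Pauli periodicities of $U$ and $U^{\dagger}$. For every $t\ge 0$ we have $(U^{\dagger})^{2^t}=(U^{2^t})^{\dagger}$. The Pauli group (modulo phases) is closed under taking adjoints, and the adjoint is an involution. Hence $(U^{\dagger})^{2^t}\in\mathcal{P}_n$ if and only if $U^{2^t}\in\mathcal{P}_n$. It follows that
\begin{equation}
\min\{t\ge 0:(U^{\dagger})^{2^t}\in\mathcal{P}_n\}=\min\{t\ge 0:U^{2^t}\in\mathcal{P}_n\}=m .
\end{equation}
So $U^{\dagger}$ is a Pauli-periodic Clifford with the same periodicity $m$, and $C(U^{\dagger})$ is a jumped Clifford.

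Finally, Theorem~\ref{thm:controlled-jump-criterion} applied to $U^{\dagger}$ gives $(CU)^{\dagger}=C(U^{\dagger})\in\mathcal{C}_{m+2}^{(n+1)}\setminus\mathcal{C}_{m+1}^{(n+1)}$. This is exactly the level the same theorem assigns to $CU$.

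The only point needing care is the degenerate case $m=0$, where $U$ is itself Pauli. Here Theorem~\ref{thm:controlled-jump-criterion} (via Proposition~\ref{prop:cpclif}) places $CU$ in level 2, and $U^{\dagger}$ is again Pauli, so the same conclusion holds. Beyond that, every step is routine.
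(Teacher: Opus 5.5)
Your proof is correct and follows the paper's own route: the paper also obtains the claim from closure of the Clifford group under inversion, the distribution law $CU\,C(U^{\dagger})=C(I)=I$, and the equality of the Pauli periodicities of $U$ and $U^{\dagger}$, and then applies Theorem~\ref{thm:controlled-jump-criterion}. Your separate $m=0$ case is harmless, though it is cleaner to argue that levels $1$ and $2$ are groups, since for $U=\pm I$ the gate $CU$ is itself Pauli rather than strictly in level $2$.
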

This follows immediately from closure of the Clifford group under inversion, the distribution law for controlled gates, and the fact that the inverse of a unitary has the same periodicity as the original. This closure under inversion is a useful feature of jumped Cliffords. For $k\ge 3$, the higher levels $\mathcal{C}_k^{(n)}$ are not groups in general, so taking an inverse need not preserve the hierarchy level for an arbitrary $k$th-level unitary.

\begin{proposition}[Controlled jump of tensor product Cliffords]\label{prop:tensor-product-periodicity}
Let $U=U_1\otimes U_2$ where $U_i\in\mathcal{C}_2^{(n_i)}$ are Pauli-periodic Cliffords with Pauli periodicities $m_i$ (Definition~\ref{def:pauli-periodicity}). Then $U$ is Pauli-periodic with periodicity
\begin{equation}
  m=\max\{m_1,m_2\},
\end{equation}
and consequently the controlled gate $C(U_1\otimes U_2)$ lies in
\begin{equation}
  C(U_1\otimes U_2)\in \mathcal{C}_{m+2}^{(n_1+n_2+1)}.
\end{equation}
\end{proposition}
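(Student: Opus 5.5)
The plan is to compute the powers of $U = U_1\otimes U_2$ directly and then apply Theorem~\ref{thm:controlled-jump-criterion}. Since the tensor product is multiplicative, for every $t\ge 0$ we have
\begin{equation}
  U^{2^t}=U_1^{2^t}\otimes U_2^{2^t}.
\end{equation}
The whole argument then reduces to one fact, which we establish first. A tensor product $A\otimes B$ of unitaries $A\in U(2^{n_1})$, $B\in U(2^{n_2})$ lies in $\mathcal{P}_{n_1+n_2}$ up to a global phase if and only if both $A$ and $B$ are Paulis up to phase.

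The ``if'' direction is immediate. For ``only if'', suppose $A\otimes B=\lambda\,P_1\otimes P_2$ with $P_i$ Pauli strings and $|\lambda|=1$. Conjugating by $P_1^\dagger\otimes I$ reduces this to $A'\otimes B=\lambda\, I\otimes P_2$ with $A'=P_1^{\dagger}A$. Taking the partial trace over the second factor, or simply comparing matrix blocks, forces $A'=c\,I$ and $B=(\lambda/c)P_2$ for some scalar $c$. Unitarity gives $|c|=1$.

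There is one subtlety, which I expect to be the main obstacle. The phases $c$ and $\lambda/c$ need not lie in $\{\pm1,\pm\ii\}$ individually, even though their product is fixed. Since Definition~\ref{def:pauli-periodicity} understands membership up to overall phase, I will read ``Pauli up to phase'' as ``Pauli up to an arbitrary global phase'' for the individual factors. This is harmless for the hierarchy level, because global phases do not affect conjugation. I will also note explicitly that for Cliffords $U_i^{2^t}$ the relevant phases can be absorbed in this way.

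With the lemma in hand, $U^{2^t}$ is Pauli (up to phase) if and only if both $U_1^{2^t}$ and $U_2^{2^t}$ are. Next we use monotonicity: if $U_i^{2^{m_i}}$ is Pauli, then so is every $U_i^{2^t}$ with $t\ge m_i$, since squaring a Pauli gives $\pm I$ up to phase. Hence $U_i^{2^t}$ is Pauli exactly when $t\ge m_i$.

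It follows that $U^{2^t}$ is Pauli exactly when $t\ge\max\{m_1,m_2\}$, so $U$ has Pauli periodicity $m=\max\{m_1,m_2\}$. Finally, $U$ is Clifford because it is a tensor product of Cliffords, using nestedness and the group property from Proposition~\ref{prop:CH-basic}. Theorem~\ref{thm:controlled-jump-criterion} then gives $C(U_1\otimes U_2)\in\mathcal{C}_{m+2}^{(n_1+n_2+1)}$, and in fact strictly in that level.
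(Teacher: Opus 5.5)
You follow the paper's route exactly: factor $U^{2^t}=U_1^{2^t}\otimes U_2^{2^t}$, use that a tensor product is Pauli iff each factor is, use monotonicity in $t$, then invoke Theorem~\ref{thm:controlled-jump-criterion}. However, the phase subtlety you flagged is a genuine gap, and your way of disposing of it fails. Relaxing ``Pauli up to phase'' to ``Pauli up to an arbitrary global phase'' is not harmless. A global phase of $U$ does not affect conjugation by $U$, but it does affect $CU$: $C(e^{\ii\theta}U)=(\mathrm{diag}(1,e^{\ii\theta})\otimes I)\,CU$ turns it into a relative phase on the control. For example, $U=e^{\ii\pi/8}I$ is Pauli up to an arbitrary phase, yet $CU=Z^{1/8}\otimes I$ lies in $\mathcal{C}_4\setminus\mathcal{C}_3$. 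So Theorem~\ref{thm:controlled-jump-criterion} is valid only for the strict notion of Definition~\ref{def:pauli-periodicity} (phases in $\{\pm1,\pm\ii\}$). For that notion, the ``only if'' half of your lemma is false. You cannot prove the lemma under one reading and apply the theorem under the other. Your remark that for Cliffords the phases ``can be absorbed'' is likewise not true. (Minor: in the lemma you mean left-multiplying by $P_1^\dagger\otimes I$, not conjugating.)

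Under the strict definition, the claimed equality $m=\max\{m_1,m_2\}$ is in fact false. Take $U_1=e^{\ii\pi/4}X$ and $U_2=e^{-\ii\pi/4}Z$. Both are Cliffords outside $\mathcal{P}_1$ with $U_1^2=\ii I$ and $U_2^2=-\ii I$, so $m_1=m_2=1$, and $C(U_1)=(T\otimes I)\,\mathrm{CNOT}$ is strictly in level $3$. But $U_1\otimes U_2=X\otimes Z$ has periodicity $0$, and $C(U_1\otimes U_2)$ is Clifford. (With $e^{\pm\ii\pi/2^k}$ in place of $e^{\pm\ii\pi/4}$ one gets $m_1=m_2=k-1$ and the same collapse.) The paper's one-line justification makes the same silent identification, so this is a defect of the statement as written, not just of your write-up. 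Your closing ``strictly in that level'' fails as well.

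What does go through is the following. The ``if'' direction alone gives $m\le\max\{m_1,m_2\}$, since each $\mathcal{P}_{n_i}$ is a group and a tensor product of elements of $\mathcal{P}_{n_1}$ and $\mathcal{P}_{n_2}$ lies in $\mathcal{P}_{n_1+n_2}$. Then Theorem~\ref{thm:controlled-jump-criterion} (or Proposition~\ref{prop:cpclif} when $m=0$), together with nestedness, yields the containment $C(U_1\otimes U_2)\in\mathcal{C}_{\max\{m_1,m_2\}+2}^{(n_1+n_2+1)}$. Equality needs an extra hypothesis. One option is $m_1\neq m_2$: at $t=\max\{m_1,m_2\}-1$ the factor with smaller periodicity is a strict Pauli, which forces both scalars $c$ and $\lambda/c$ in your block comparison into $\{\pm1,\pm\ii\}$. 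Another is $m_1=m_2\ge1$ with some $U_i^{2^{m_i-1}}$ not proportional to any Pauli, i.e.\ with symplectic matrix satisfying $F_i^{2^{m_i-1}}\neq I$; this holds for the saturating examples in the paper.
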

\noindent In fact, since $(U_1\otimes U_2)^{2^t}=U_1^{2^t}\otimes U_2^{2^t}$, the smallest $t$ for which this becomes a Pauli (up to phase) is exactly $t=\max\{m_1,m_2\}$, because $U_i^{2^t}$ is Pauli iff $t\ge m_i$.
With $m=\max\{m_1,m_2\}$, Theorem~\ref{thm:controlled-jump-criterion} applied to the Clifford $U_1\otimes U_2$ implies that $C(U_1\otimes U_2)\in\mathcal{C}_{m+2}$. Proposition~\ref{prop:tensor-product-periodicity} is meaningful in logical computation, where a transversal gate $\bar U$ is implemented by a tensor product of physical Cliffords acting on each individual party. It shows that the controlled logical gate $C(\bar{U})$ lies in the same level of the Clifford hierarchy as each physical controlled gate $C(U_p)$.

\begin{proposition}[Jumped Cliffords are Clifford+$T$]\label{prop:cliffordT}
  Every jumped Clifford admits an exact Clifford+$T$ decomposition.
\end{proposition}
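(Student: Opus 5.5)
The plan is to decompose $U$ into standard Clifford generators and control each factor. By Proposition~\ref{prop:CH-basic}(4), the target Clifford can be written as $U=\omega\, g_L\cdots g_1$. Here each $g_j\in\{H,S,\mathrm{CNOT}\}$ acts on one or two target qubits, and $\omega$ is a global phase. The distribution law~\eqref{eq:cdistribution} then gives
\begin{equation}
  CU = C(\omega I)\, C(g_L)\cdots C(g_1).
\end{equation}
So it suffices to show two things: each controlled generator has an exact Clifford+$T$ circuit, and the controlled phase $C(\omega I)=\mathrm{diag}(1,\omega)\otimes I_{2^n}$ does too. Note that the Pauli-periodicity of $U$ is not needed for exactness. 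It only guarantees, via Theorem~\ref{thm:controlled-jump-criterion}, that the resulting circuit lies in a finite level.

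For the generators we use standard circuits. $CS$ is a diagonal gate with entries in $\{1,\ii\}$ and has the well-known exact form $CS = T_c\, T_t\, \mathrm{CNOT}\, T_t^{\dagger}\, \mathrm{CNOT}$ (up to the ordering convention). $C\mathrm{CNOT}$ is the Toffoli gate, with the standard 7-$T$ decomposition. For $CH$, we write $H = e^{\ii\pi/2}\,\cdot$ (a conjugate of $Z$ or $X$ by a single-qubit Clifford of the form $S^\dagger H T^\dagger\cdots$). Concretely, one can use $CH = (I\otimes R^\dagger)\,CZ\,(I\otimes R)$ with $R = S H T H S^\dagger$-type rotations, because $H$ is conjugate to $Z$ by a $\pi/8$ rotation about $Y$. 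That rotation $e^{-\ii\pi Y/8}$ equals $S H T H S^\dagger$ up to phase, which is exactly Clifford+$T$. Any leftover global phase on a controlled factor becomes a controlled phase and is absorbed into the next step.

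The main obstacle is the controlled global phase $\mathrm{diag}(1,\omega)$ on the control qubit. It is exactly Clifford+$T$ only if $\omega$ is a power of $e^{\ii\pi/4}$, and an arbitrary Clifford matrix may carry an arbitrary phase. I would first fix the phase convention, since a Clifford is naturally defined only up to phase. Then I would show that the phase is actually forced by the jump criterion: Theorem~\ref{thm:controlled-jump-criterion} places $CU$ in $\mathcal{C}_{m+2}$, and Lemma~\ref{lemma:blockdiagonal} together with the iteration~\eqref{eq:wm} requires $U^{2^m}$ to equal a Pauli with phase in $\{\pm1,\pm\ii\}$. Writing $U=\omega U_0$ with $U_0$ a product of generators, $U_0^{2^m}$ is a Pauli times an eighth root of unity, because $H,S,\mathrm{CNOT}$ generate matrices over $\mathbb{Z}[\tfrac{1}{\sqrt2},\ii]$. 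Hence $\omega^{2^m}$ lies in $\langle e^{\ii\pi/4}\rangle$, and so $\omega$ is a $2^{m+3}$-th root of unity.

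This root of unity may be finer than $e^{\ii\pi/4}$. In that case $C(\omega I)=Z^{1/2^k}$ on the control, which is not exactly Clifford+$T$ for $k>2$ by the ring criterion of Giles--Selinger. I would therefore state the result for the canonical phase normalization in which $U_0$ is a word in $H,S,\mathrm{CNOT}$; equivalently, the jumped Clifford is taken up to a control-phase Clifford. I would note explicitly that without this normalization the statement needs the phase $\omega\in\langle e^{\ii\pi/4}\rangle$. With it, the product of the generator circuits gives the exact Clifford+$T$ decomposition.
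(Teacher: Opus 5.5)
Your core argument is the paper's: write $U$ as a word in $S$, $H$, $\mathrm{CNOT}$, split $CU$ with the distribution law~\eqref{eq:cdistribution}, and note that $C(S)$, $C(H)$ and Toffoli are exactly Clifford+$T$. The paper just cites Giles--Selinger for that last step. Your explicit circuits are correct, including $CH=(I\otimes R)\,CZ\,(I\otimes R^{\dagger})$ with $R=e^{-\ii\pi Y/8}\propto SHTHS^{\dagger}$, whose scalar cancels in the conjugation.

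What you add is the treatment of the global phase, and you are right. The paper's proof uses Proposition~\ref{prop:CH-basic}(4) as if every element of $\mathcal{C}_2^{(n)}$ were literally a word in $S,H,\mathrm{CNOT}$. But $\mathcal{C}_2^{(n)}$ as defined contains every scalar multiple $\omega U_0$. The global phase $\omega$ of $U$ then becomes the \emph{relative} phase $\mathrm{diag}(1,\omega)$ on the control of $CU$, so it cannot be discarded. For example, $U=e^{\ii\pi/8}I$ is a $2$-Pauli-periodic Clifford. Yet $CU=\mathrm{diag}(1,e^{\ii\pi/8})\otimes I$ has no exact Clifford+$T$ circuit, even up to global phase and with ancillas, because $e^{\ii\pi/8}\notin\mathbb{Q}(e^{\ii\pi/4})$. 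So the proposition, and the paper's proof, need exactly the normalization you impose: $\omega\in\langle e^{\ii\pi/4}\rangle$, equivalently $U\in\langle H,S,\mathrm{CNOT}\rangle$.

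Three small points:
\begin{enumerate}
\item ``Up to a control-phase Clifford'' is not equivalent to that normalization, since $\mathrm{diag}(1,\omega)$ is Clifford only for $\omega\in\{\pm1,\pm\ii\}$. State the hypothesis on $\omega$ directly.
\item The detour through Theorem~\ref{thm:controlled-jump-criterion} and the iteration to get $U^{2^m}\in\mathcal{P}_n$ is unnecessary, since Definition~\ref{def:pauli-periodicity} already gives it.
\item The claim that $U_0^{2^m}$ is a Pauli times an eighth root of unity does not follow from ring membership alone. It needs the (true) fact that the unit-modulus elements of $\mathbb{Z}[\tfrac{1}{\sqrt2},\ii]$ are exactly the eighth roots of unity.
\end{enumerate}
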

\begin{proof}
  From Proposition~\ref{prop:CH-basic}(4), every Clifford $U$ can be expressed as a product of $S$ gates, Hadamards, and CNOTs. Using the distribution law in Eq.~\eqref{eq:cdistribution}, the corresponding controlled gate $CU$ can therefore be written as a product of gates of the form $C(S)$, $C(H)$, and $C(\mathrm{CNOT})$.
  Since each factor in such a product has an exact Clifford+$T$ decomposition~\cite{giles2013exact}, concatenating these decompositions yields an exact Clifford+$T$ circuit for $CU$.
\end{proof}

Proposition~\ref{prop:cliffordT} shows that every jumped Clifford can be implemented exactly using only Clifford+$T$ operations. Thus, unlike finer phase gates, even when a jumped Clifford lies in a high level of the Clifford hierarchy, it does not require approximate synthesis from a universal gate set.  This observation will be crucial in our logical phase gate protocol in Sec.~\ref{sec:qecapplication}.

\subsection{The qubit resource for controlled jumps}

Theorem~\ref{thm:controlled-jump-criterion} shows that achieving a jump to a high level requires a Clifford gate with large Pauli periodicity. Here we prove a general upper bound on how large the Pauli periodicity of an $n$-qubit Clifford can be, and we give an explicit family of Cliffords that attains this bound.

\begin{theorem}[Upper bound on Pauli periodicity]\label{thm:lowerboundqubit}
  Let $U\in\mathcal{C}_2^{(n)}$ be a Clifford unitary on $n$ qubits. Suppose that $U$ is $m$-Pauli-periodic (Definition~\ref{def:pauli-periodicity}), i.e., $U^{2^m}\in\mathcal{P}_n$ up to phase.
  Then
  \begin{equation}
    m\le \lceil\log_2(2n)\rceil.
  \end{equation}
  Moreover, the upper bound is tight: for each $n>1$ there exists a Clifford unitary $U\in\mathcal{C}_2^{(n)}$ whose Pauli periodicity is exactly $\lceil\log_2(2n)\rceil$.
\end{theorem}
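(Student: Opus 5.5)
The plan is to pass to the binary symplectic representation. Every Clifford $U\in\mathcal{C}_2^{(n)}$ induces a matrix $M_U\in Sp(2n,\mathbb{F}_2)$ through its action $UX^{a}Z^{b}U^\dagger\propto X^{a'}Z^{b'}$. The map $U\mapsto M_U$ is a group homomorphism whose kernel is exactly the Paulis up to global phase. Hence $U^{2^t}\in\mathcal{P}_n$ (up to phase) if and only if $M_U^{2^t}=I_{2n}$. The Pauli periodicity $m$ is therefore the least $t$ with $M_U^{2^t}=I$. In particular, a Pauli-periodic $U$ has $M:=M_U$ of $2$-power order, so $M$ is unipotent: $N:=M-I$ is nilpotent over $\mathbb{F}_2$.

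For the upper bound I use characteristic $2$. Since $I$ and $N$ commute, the binomial identity gives $M^{2^t}=(I+N)^{2^t}=I+N^{2^t}$. A nilpotent $2n\times 2n$ matrix satisfies $N^{2n}=0$, so $M^{2^t}=I$ as soon as $2^t\ge 2n$, which gives $m\le\lceil\log_2(2n)\rceil$. The same identity gives the exact value of $m$. If $e$ is the nilpotency index of $N$ (the largest Jordan block of $M$), then $m=\lceil\log_2 e\rceil$. Tightness therefore reduces to exhibiting a symplectic $M$ with a single Jordan block of size $2n$, i.e. $N^{2n-1}\neq 0$. Then $2^{m}\ge 2n$, and since $2^{m-1}\ge 2n$ fails, $m=\lceil\log_2(2n)\rceil$ exactly. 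Any such $M$ lifts to a Clifford $U$ because the homomorphism onto $Sp(2n,\mathbb{F}_2)$ is surjective.

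The main obstacle is the explicit construction of a regular unipotent element of $Sp(2n,\mathbb{F}_2)$ together with a clean circuit for it. Such elements are known to exist in characteristic $2$, but I want a concrete gate. The naive choice is a CNOT ladder $A=J_n$ (a single unipotent Jordan block acting on $X$-parts), with $M=\mathrm{diag}(J_n,J_n^{-T})$. This only gives two blocks of size $n$ and periodicity $\lceil\log_2 n\rceil$, which is one short of the bound. So I first try composing the CNOT ladder with a phase gate $S$ on the end qubit of the chain. The $S$ gate acts as the transvection $X_n\mapsto Y_n$, which couples the $X$-block into the $Z$-block.

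For the composed $U$, I will track a cyclic vector. Starting from $x=e_1$ (i.e. $X_1$), I check that $N^{j}e_1$ propagates down the $X$-chain for $n-1$ steps, then crosses into the $Z$-part via the $S$ transvection, and then climbs back up the $J_n^{-T}$ chain. This shows $N^{2n-1}e_1\neq 0$, so $M$ has a single Jordan block of size $2n$. If the ordering of the ladder causes the $Z$-chain to run the wrong way, I will reverse the ladder or move $S$ to the opposite end of the chain. I will sanity-check the construction at $n=1$ ($S$ has periodicity $1$) and at $n=2$, where the bound requires periodicity $2$.
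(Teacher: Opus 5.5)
Your proposal is correct. The upper bound follows the paper's argument: the symplectic representation has the Paulis as kernel, and then $(I+N)^{2^t}=I+N^{2^t}$ in characteristic $2$ together with $N^{2n}=0$. Your tightness argument, however, takes a different route.

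The paper's proof obtains tightness by citing Testerman's order formula for regular unipotent elements of type $C_n$, where the highest root has height $2n-1$. Separately, in the appendix, it verifies an explicit $S_X$--CNOT--$H$ string through a fairly intricate Jordan-chain computation. You instead build the regular unipotent element directly as $S_n$ times a CNOT ladder.

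This works. Both factors preserve the $Z$-Lagrangian, so $N=\begin{pmatrix} N_X & 0\\ K & N_Z\end{pmatrix}$ with $N_X=A-I$ and $N_Z=A^{-T}-I$, each of nilpotency index $n$. This forces $N^{2n-1}=N_Z^{\,n-1}KN_X^{\,n-1}$. That product is nonzero exactly when $S$ acts on a qubit in the support of the fixed vector of $A$. For CNOTs pointing $j\to j+1$, in either composition order, this is the last target qubit $n$. For instance, with $U=S_nL$ and $A=J_n$ one gets $N^{2n-1}e_1^X=e_1^Z$. So your fallback of moving $S$ to the other end is not cosmetic: $S$ on the control end of the chain would fail.

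What each route buys: yours is self-contained and elementary, and gives a simpler circuit than $SCH_n$. In Lie-theoretic terms it is the product of root elements over all simple roots, with the CNOTs supplying the short ones and $S_n$ the long one. It also makes transparent that a single $S$ raises the Clifford-permutation value $\lceil\log_2 n\rceil$ by exactly one. The paper's citation identifies the extremal periodicity with the order of a regular unipotent element, at the cost of external algebraic-group machinery.
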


Before proving Theorem~\ref{thm:lowerboundqubit}, we recall the standard binary symplectic description of $n$-qubit Clifford unitaries: a Clifford is determined (up to global phase) by its action on Pauli operators, which can be encoded by a symplectic matrix over $\mathbb{F}_2$ together with a linear phase function.
\begin{lemma}[Binary matrix representation of Cliffords~\cite{DD_2003,AaronsonGottesman2004}]\label{lem:binary-matrix-clifford}
  Let $U\in\mathcal{C}_2^{(n)}$ be an $n$-qubit Clifford unitary. Then there exist
  \begin{equation}
    F\in \mathrm{Sp}(2n,\mathbb{F}_2)\qquad\text{and}\qquad \gamma\in (\mathbb{Z}_4)^{2n}
  \end{equation}
  such that for every Pauli operator $P\in\mathcal{P}_n$ written (up to phase) as
  \begin{equation}
    P\equiv X^{\mathbf{x}}Z^{\mathbf{z}},\qquad (\mathbf{x},\mathbf{z})\in (\mathbb{F}_2)^{2n},
  \end{equation}
  one has
  \begin{equation}
    U\,P\,U^{\dagger} \equiv \ii^{\langle \gamma,(\mathbf{x},\mathbf{z})\rangle}\, X^{\mathbf{x}'}Z^{\mathbf{z}'},
    \qquad (\mathbf{x}',\mathbf{z}') = F(\mathbf{x},\mathbf{z}),
  \end{equation}
  where $\langle\cdot,\cdot\rangle$ denotes the natural pairing $(\mathbb{Z}_4)^{2n}\times(\mathbb{F}_2)^{2n}\to\mathbb{Z}_4$, and $\equiv$ denotes equality up to an overall phase.
  Conversely, any pair $(F,\gamma)$ of this form specifies a Clifford unitary up to global phase.

  %A common equivalent parametrization is by a binary symplectic matrix together with a linear (or affine) phase polynomial.
\end{lemma}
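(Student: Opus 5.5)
The plan is to build the pair $(F,\gamma)$ directly from the images of the generators $X_j,Z_j$ under conjugation by $U$. The converse direction then reduces to a generation statement for $\mathrm{Sp}(2n,\mathbb{F}_2)$ plus a Pauli correction for the phases.

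\emph{Forward direction.} Since $U\in\mathcal{C}_2^{(n)}$, each $UX_jU^\dagger$ and $UZ_jU^\dagger$ is a Pauli operator. Write them as $\ii^{\gamma_j}X^{\mathbf{a}_j}Z^{\mathbf{b}_j}$ and $\ii^{\gamma_{n+j}}X^{\mathbf{c}_j}Z^{\mathbf{d}_j}$, using the fixed normal form $X^{\mathbf{x}}Z^{\mathbf{z}}$. Let $F$ be the $2n\times 2n$ binary matrix whose columns are $(\mathbf{a}_j,\mathbf{b}_j)$ and $(\mathbf{c}_j,\mathbf{d}_j)$, and let $\gamma$ collect the exponents $\gamma_1,\dots,\gamma_{2n}$. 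For a general $P\equiv X^{\mathbf{x}}Z^{\mathbf{z}}=\prod_j X_j^{x_j}\prod_j Z_j^{z_j}$, conjugation is multiplicative, so
\begin{equation}
  UPU^\dagger=\prod_j (UX_jU^\dagger)^{x_j}\prod_j (UZ_jU^\dagger)^{z_j}.
\end{equation}
Multiplying Pauli operators adds their binary labels mod $2$, which gives the label $F(\mathbf{x},\mathbf{z})$. The accumulated prefactor is $\ii^{\langle\gamma,(\mathbf{x},\mathbf{z})\rangle}$ times a sign coming from reordering the $X$ and $Z$ parts into normal form. Because the statement only asserts equality up to an overall phase ($\equiv$), this reordering sign is absorbed. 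I will nevertheless note that the generator data $(F,\gamma)$ determines the phase completely, and that this is the information the later periodicity argument uses.

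\emph{Symplecticity of $F$.} Two Paulis $X^{\mathbf{x}}Z^{\mathbf{z}}$ and $X^{\mathbf{x}'}Z^{\mathbf{z}'}$ commute or anticommute according to the symplectic form $\mathbf{x}\cdot\mathbf{z}'+\mathbf{z}\cdot\mathbf{x}' \bmod 2$. Conjugation by $U$ preserves commutation relations. Hence $F$ preserves this form on the basis vectors, and therefore on everything by bilinearity, so $F\in\mathrm{Sp}(2n,\mathbb{F}_2)$.

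\emph{Converse and uniqueness.} Suppose two Cliffords $U$ and $V$ induce the same action on Paulis. Then $U^\dagger V$ commutes with every Pauli operator. Since the Paulis span $M_{2^n}(\mathbb{C})$, $U^\dagger V$ is a scalar, which gives uniqueness up to global phase.

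For existence, given $F\in\mathrm{Sp}(2n,\mathbb{F}_2)$, I will use that the symplectic group is generated by the binary images of $H$, $S$ and $\mathrm{CNOT}$ (equivalently, by symplectic transvections, each of which is realized by a Clifford of the form $\exp(\ii\pi Q/4)$ with $Q$ a Pauli). Composing these yields a Clifford $U_0$ realizing $F$. The remaining discrepancy in phases is handled as follows. Hermiticity of the images forces the phase data to be determined by $F$ up to signs $(-1)^{s_j}$ on the generators. Because $F$ is symplectic, the linear functional $s$ can be written as $s(\mathbf{v})=\omega(\mathbf{w},F\mathbf{v})$ for some $\mathbf{w}$, where $\omega$ is the symplectic form. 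Composing $U_0$ with the Pauli labelled by $\mathbf{w}$ then produces exactly those signs.

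\emph{Main obstacle.} I expect the main difficulty to be this last step, namely showing that every admissible $\gamma$ is reached. The key point is pinning down which $\gamma\in(\mathbb{Z}_4)^{2n}$ are compatible with Hermiticity, so that only the sign freedom remains and the Pauli-correction argument covers it.
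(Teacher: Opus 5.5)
Your argument is correct and is essentially the standard proof from the works the paper cites; the paper itself gives no proof and defers to Dehaene--De Moor and Aaronson--Gottesman. As there, you read off $F$ from the images of $X_j,Z_j$, get symplecticity from preservation of commutation relations, and get the converse from generation of $\mathrm{Sp}(2n,\mathbb{F}_2)$ by transvections (or by $H,S,\mathrm{CNOT}$), a Pauli correction of the generator signs, and uniqueness because the Paulis span $M_{2^n}(\mathbb{C})$. The step you flag as the main obstacle is already settled by your own Hermiticity observation, which fixes each $\gamma_j$ modulo $2$ (to $\mathbf{a}_j\cdot\mathbf{b}_j$, resp.\ $\mathbf{c}_j\cdot\mathbf{d}_j$). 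Your reading of $\gamma$ as exact generator phases is also necessary: under the literal ``up to overall phase'' reading $\gamma$ is vacuous, and $(F,\gamma)$ would fix $U$ only up to a Pauli factor.
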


\begin{lemma}[Periodicity bound for invertible binary matrices]\label{lem:unipotent-symplectic}
  Let $F\in\mathrm{SL}(2n,\mathbb{F}_2)$ and suppose that $F$ has $2$-power order, i.e.,
  \begin{equation}
    F^{2^t}=I
  \end{equation}
  for some $t\ge 0$.
  Then $F$ is unipotent and the nilpotent matrix $N:=F-I$ satisfies
  \begin{equation}
    (F-I)^{2n}=N^{2n}=0.
  \end{equation}
  Moreover,
  \begin{equation}
    F^{2^{\lceil\log_2(2n)\rceil}}=I.
  \end{equation}
\end{lemma}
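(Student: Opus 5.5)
The plan is to work over $\mathbb{F}_2$ and use the Frobenius property of the binomial expansion in characteristic $2$. The whole argument runs through the single identity
\begin{equation}
  F^{2^s}=(I+N)^{2^s}=I+N^{2^s},\qquad N:=F-I,
\end{equation}
valid for every $s\ge 0$. It holds because $I$ and $N$ commute and $\binom{2^s}{j}$ is even for every $0<j<2^s$. Equivalently, it follows by induction on $s$ from $(I+A)^2=I+2A+A^2=I+A^2$ over $\mathbb{F}_2$.

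\textbf{Step 1: unipotence.} Suppose $F^{2^t}=I$. By the identity, $I=F^{2^t}=I+N^{2^t}$, so $N^{2^t}=0$. Thus $N$ is nilpotent and $F=I+N$ is unipotent. Note that $\mathrm{SL}$ versus $\mathrm{GL}$ plays no role over $\mathbb{F}_2$. Alternatively, one can argue through the minimal polynomial: it divides $x^{2^t}-1=(x-1)^{2^t}$, so every eigenvalue of $F$ over the algebraic closure equals $1$.

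\textbf{Step 2: the bound $N^{2n}=0$.} A nilpotent $2n\times 2n$ matrix over any field satisfies $N^{2n}=0$. To see this, look at the chain of kernels $\ker N^j$. It strictly increases until it stabilizes, and it must end at the whole space $\mathbb{F}_2^{2n}$ since $N$ is nilpotent. A strictly increasing chain of subspaces of $\mathbb{F}_2^{2n}$ stabilizes within $2n$ steps, so $\ker N^{2n}=\mathbb{F}_2^{2n}$. Cayley--Hamilton gives the same conclusion: the characteristic polynomial of a nilpotent matrix is $x^{2n}$.

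\textbf{Step 3: the explicit exponent.} Set $s=\lceil\log_2(2n)\rceil$, so that $2^s\ge 2n$. Then $N^{2^s}=N^{2n}N^{2^s-2n}=0$. Applying the identity once more gives $F^{2^s}=I+N^{2^s}=I$.

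I do not expect a real obstacle here. The only point requiring care is to state the characteristic-$2$ binomial collapse explicitly, since everything else is standard linear algebra.
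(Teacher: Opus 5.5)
Your proof is correct and has the same three-step structure as the paper's: unipotence, then $N^{2n}=0$ from the dimension bound, then the characteristic-$2$ collapse $(I+N)^{2^s}=I+N^{2^s}$ with $2^s\ge 2n$. The only difference is the unipotence step. The paper shows every eigenvalue lies in $\mathbb{F}_{2^s}^{\times}$, a group of odd order, and hence equals $1$; you instead read off $N^{2^t}=0$ directly from $I=F^{2^t}=I+N^{2^t}$, which is slightly more economical because it reuses the identity needed anyway in the last step.
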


\begin{proof}
  Let $\overline{\mathbb{F}}_2$ be an algebraic closure of $\mathbb{F}_2$ and let $\lambda$ be an eigenvalue of $F$ over $\overline{\mathbb{F}}_2$.
  From $F^{2^t}=I$ we obtain $\lambda^{2^t}=1$.
  Since $\lambda$ lies in some finite extension $\mathbb{F}_{2^s}$, it belongs to the cyclic group $\mathbb{F}_{2^s}^{\times}$ of order $2^s-1$ (odd).
  Hence the only element of $2$-power order in $\mathbb{F}_{2^s}^{\times}$ is $1$, so $\lambda=1$.
  Thus all eigenvalues of $F$ are $1$, i.e., $F$ is unipotent.

  Since $F$ is unipotent, its minimal polynomial has the form $(x-1)^r$ with $1\le r\le 2n$.
  Equivalently, writing $F=I+N$ we have $N^r=0$, hence in particular $N^{2n}=0$.

  Over characteristic $2$, the Frobenius/binomial identity gives
  \begin{equation}\label{eq:frobenius-unipotent}
    (I+N)^{2^k}=I+N^{2^k}\qquad\text{for all }k\ge 0.
  \end{equation}
  Choosing $k=\lceil\log_2(2n)\rceil$ yields $2^k\ge 2n\ge r$, so $N^{2^k}=0$ and therefore $F^{2^k}=(I+N)^{2^k}=I$.
\end{proof}

\begin{proof}[Proof of Theorem~\ref{thm:lowerboundqubit}]
Let $F\in \mathrm{Sp}(2n,\mathbb{F}_2)$ be the binary symplectic matrix of $U$ from Lemma~\ref{lem:binary-matrix-clifford}.
If $U^{2^m}\in \mathcal{P}_n$ (up to phase), then conjugation by $U^{2^m}$ acts trivially on Pauli labels, hence
\begin{equation}\label{eq:pauli-implies-F-trivial}
  F^{2^m}=I.
\end{equation}
In particular, $F$ has $2$-power order, so Lemma~\ref{lem:unipotent-symplectic} applies and yields
\begin{equation}
  F^{2^{\lceil\log_2(2n)\rceil}}=I.
\end{equation}
Therefore $m\le \lceil \log_2(2n)\rceil$.

To show the upper bound is achievable, we use the existence of a regular unipotent element of large $2$-power order in $\mathrm{Sp}(2n,\mathbb{F}_2)$.
The group $\mathrm{Sp}(2n,\mathbb{F}_2)$ is of Lie type $C_n$ in defining characteristic $p=2$.
By Testerman's order formula for regular unipotent elements~\cite[Eq.~0.4]{Testerman1995A1},
there exists a regular unipotent element $x\in \mathrm{Sp}(2n,\mathbb{F}_2)$ whose order is the smallest power of the characteristic $p$ that is strictly larger than the height of the highest root.
For type $C_n$ the Coxeter number is $h=2n$, and the height of the highest root is $\mathrm{ht}(\alpha_0)=h-1=2n-1$~\cite{MITOCW18755Coxeter}.
Hence $\mathrm{ord}(x)$ is the smallest $2^k$ satisfying $2^k>2n-1$, i.e.,
\begin{equation}
  k=\lceil \log_2(2n)\rceil.
\end{equation}

Choose any Clifford unitary $U_x\in \mathcal{C}_2^{(n)}$ whose induced symplectic action is $F=x$ (such a Clifford exists by Lemma~\ref{lem:binary-matrix-clifford}).
Then $F^{2^k}=I$ implies $U_x^{2^k}\in \mathcal{P}_n$ up to phase.
Moreover, since $\mathrm{ord}(x)=2^k$ is exact, we have $F^{2^{k-1}}\neq I$, and therefore $U_x^{2^{k-1}}\notin \mathcal{P}_n$.
Thus $U_x$ has Pauli periodicity exactly $k=\lceil \log_2(2n)\rceil$.
\end{proof}

In Sec.~\ref{sec:sxcnoth}, we will provide a concrete example of a Pauli-periodic unitary that saturates the bound in Theorem~\ref{thm:lowerboundqubit}.

\begin{corollary}[Qubit lower bound for a $k$-level controlled jump]\label{cor:qubit-lb-kjump}
  Let $U\in\mathcal{C}_2^{(n)}$ be an $n$-qubit Clifford and suppose that $CU$ lies strictly in level $k\geq 4$, i.e.,
  $CU\in\mathcal{C}_k^{(n+1)}\setminus\mathcal{C}_{k-1}^{(n+1)}$.
  Then the target unitary $U$ must act nontrivially on at least $ n \;\ge\; 2^{k-4}+1$ qubits.

\end{corollary}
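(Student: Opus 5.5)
We combine the controlled jump criterion (Theorem~\ref{thm:controlled-jump-criterion}) with the periodicity bound (Theorem~\ref{thm:lowerboundqubit}).

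First we identify the Pauli periodicity of $U$. Since $CU\in\mathcal{C}_k^{(n+1)}$, Theorem~\ref{thm:AW-necessary} guarantees that $U$ is Pauli-periodic with some finite periodicity $m$. Theorem~\ref{thm:controlled-jump-criterion} then places $CU$ strictly in level $m+2$. The levels at which a unitary lies strictly are unique, because the hierarchy is nested, so $m+2=k$ and $m=k-2$.

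Next we apply the upper bound. Theorem~\ref{thm:lowerboundqubit} gives $k-2=m\le\lceil\log_2(2n)\rceil$. Since $k\ge 4$, we have $k-3\ge 1$. We claim that $\lceil\log_2(2n)\rceil\ge k-2$ forces $2n>2^{k-3}$. Indeed, if $2n\le 2^{k-3}$, then $\lceil\log_2(2n)\rceil\le k-3$, which is a contradiction. Hence $n>2^{k-4}$, and since $n$ is an integer, $n\ge 2^{k-4}+1$.

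Two technical points remain, and we expect the second to be the main one.

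\emph{Non-integral $2^{k-4}$.} The integrality step uses that $2^{k-4}$ is an integer, which is exactly why the hypothesis $k\ge 4$ is imposed.

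\emph{Qubits acted on nontrivially versus $n$.} The phrase ``acts nontrivially on at least $n$ qubits'' must be reconciled with the total register size $n$. If $U$ acts trivially on some qubits, we write $U=U'\otimes I$ with $U'$ a Clifford on the $n'$ qubits where $U$ acts nontrivially. Then $U^{2^t}=U'^{2^t}\otimes I$ is Pauli if and only if $U'^{2^t}$ is Pauli. This is the tensor-product reasoning of Proposition~\ref{prop:tensor-product-periodicity}, where a Pauli factor $I$ has periodicity $0$. So $U'$ also has periodicity $k-2$, and the bound above applies with $n'$ in place of $n$, giving $n'\ge 2^{k-4}+1$.

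One subtlety is that $U'$ must itself be Clifford. This holds because a Clifford that factors as $U'\otimes I$ up to phase conjugates Paulis supported on the first $n'$ qubits to Paulis of the form $Q\otimes I$. We check this quickly via Lemma~\ref{lemma:blockdiagonal}-style reasoning, or directly by partial trace, and then the proof is complete.
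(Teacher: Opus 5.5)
Your proof is correct and is essentially the paper's argument: Theorem~\ref{thm:controlled-jump-criterion} pins the Pauli periodicity at $m=k-2$, Theorem~\ref{thm:lowerboundqubit} gives $k-2\le\lceil\log_2(2n)\rceil$, hence $2n>2^{k-3}$ and $n\ge 2^{k-4}+1$. Your additions (Theorem~\ref{thm:AW-necessary} to guarantee a finite periodicity exists, uniqueness of the strict level, and restriction to the qubits acted on nontrivially) only make explicit what the paper leaves implicit, with one caveat shared with the paper: the proof of Theorem~\ref{thm:lowerboundqubit} only shows $F^{2^{\lceil\log_2(2n)\rceil}}=I$, i.e.\ that $U^{2^{\lceil\log_2(2n)\rceil}}$ is Pauli up to an \emph{arbitrary} phase, whereas Definition~\ref{def:pauli-periodicity} allows only phases in $\{\pm1,\pm\ii\}$, so without restricting the global phase of $U$ both arguments fail; for example, $U=e^{\ii\pi/8}I_2\in\mathcal{C}_2^{(1)}$ gives $CU=Z^{1/8}\otimes I_2$, which lies strictly in level $4$ with $n=1$.
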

\begin{proof}
  If $CU$ is strictly in level $k$, then by Theorem~\ref{thm:controlled-jump-criterion} the Clifford $U$ has Pauli periodicity $m=k-2$.
  Applying Theorem~\ref{thm:lowerboundqubit} to $U$ gives
  \begin{equation}
    k-2=m \le \lceil\log_2(2n)\rceil.
  \end{equation}
  This implies $2n>2^{k-3}$ and hence the smallest possible integer $n$ is $2^{k-4}+1$ for $k\geq 4$.
\end{proof}

\section{Examples of Pauli-periodic and jumped Cliffords}

In this section, we provide a few examples of Pauli-periodic Cliffords and their jumped Cliffords. In particular, we construct permutation gates in higher Clifford hierarchy via controlled Clifford permutations. In addition, we give the example of an optimal periodic Clifford that saturates the lower bound in Corollary~\ref{cor:qubit-lb-kjump}.

\subsection{Controlled Clifford permutations}
\label{sec:cclifperm}
We first consider control gates of a simple and well-structured subclass of Clifford unitaries: Pauli-periodic Cliffords that act as permutations of computational basis states.

\begin{definition}[Permutation gate]
  An $n$-qubit unitary $U$ is called a permutation gate if there exists a permutation $\pi$ of $\{0,1\}^n$ such that
  \begin{equation}
    U_\pi\ket{a}=\ket{\pi(a)}\qquad \text{for all }a\in\{0,1\}^n.
  \end{equation}
\end{definition}
\begin{proposition}[Polynomial descriptions of a permutation gate]
  Every permutation gate over $n$ qubits can be written as
  \begin{equation}
    U_\pi=\sum_{a\in\{0,1\}^n}\ket{\pi(a)}\bra{a},
  \end{equation}
  where $\pi$ is a permutation of $\{0,1\}^n$. Writing $\pi(a)=(\pi_1(a),\ldots,\pi_n(a))$ and viewing
  $a=(a_1,\ldots,a_n)$ as a vector over $\mathbb{F}_2$, each coordinate function
  $\pi_i:\{0,1\}^n\to\{0,1\}$ admits a unique representation as a polynomial in $a_1,\ldots,a_n$
  over $\mathbb{F}_2$.
\end{proposition}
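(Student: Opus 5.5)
The plan is to split the statement into its two parts: first show that any permutation gate has the stated matrix form, and then show that each coordinate function has a unique polynomial representative over $\mathbb{F}_2$.

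For the matrix form, I would expand $U_\pi$ in the computational basis. Since $U_\pi\ket{a}=\ket{\pi(a)}$ and $\{\ket{a}\}_{a\in\{0,1\}^n}$ is an orthonormal basis, inserting the resolution of identity $I=\sum_a\ket{a}\bra{a}$ gives
\begin{equation}
U_\pi=U_\pi\sum_a\ket{a}\bra{a}=\sum_a\ket{\pi(a)}\bra{a}.
\end{equation}
Unitarity comes from $\pi$ being a bijection: $U_\pi^\dagger U_\pi=\sum_{a,b}\ket{b}\braket{\pi(b)|\pi(a)}\bra{a}=I$, since $\braket{\pi(b)|\pi(a)}=\delta_{ab}$ by injectivity.

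For the polynomial representation, I would use the standard algebraic normal form argument.

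\emph{Existence.} Fix $i$ and write
\begin{equation}
\pi_i(a)=\sum_{b\in\{0,1\}^n}\pi_i(b)\,\delta_b(a),\qquad \delta_b(a)=\prod_{j=1}^n (a_j+b_j+1)\pmod 2.
\end{equation}
Here $\delta_b$ is the indicator of the point $b$. Expanding the products and reducing with $a_j^2=a_j$, which holds on $\{0,1\}$, yields a multilinear polynomial over $\mathbb{F}_2$.

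\emph{Uniqueness.} Multilinear polynomials in $n$ variables form an $\mathbb{F}_2$-vector space spanned by the $2^n$ monomials $\prod_{j\in S}a_j$, so the space has $2^n$ elements per coefficient choice, i.e.\ $2^{2^n}$ polynomials in total. The space of functions $\{0,1\}^n\to\{0,1\}$ also has size $2^{2^n}$. The evaluation map is linear and, by the existence step, surjective. Since the domain and codomain have equal finite size, it is a bijection, which gives uniqueness.

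Alternatively, one can argue injectivity directly. Suppose a nonzero multilinear $p$ vanishes everywhere, and pick a monomial of minimal degree with index set $S$. Evaluating $p$ at the indicator vector of $S$ returns that monomial's coefficient, which is $1$, a contradiction.

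The only subtle point is the meaning of ``unique'': it must be taken within the multilinear (reduced) polynomials, since $a_j^2$ and $a_j$ define the same function. I will state this convention explicitly, and beyond that I expect no real obstacle.
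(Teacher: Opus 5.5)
Your proof is correct and complete. The paper does not prove this proposition. It treats the matrix form as immediate from the definition $U_\pi\ket{a}=\ket{\pi(a)}$, and it invokes the polynomial part as the standard algebraic normal form of a Boolean function. Your write-up therefore supplies the standard justification the paper leaves implicit, so there is no competing route to compare against.

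Your insistence that uniqueness be taken among multilinear polynomials is a needed correction to the statement as written, not a technicality. Without it the claim is false: $a_1$ and $a_1^2$ define the same function on $\{0,1\}^n$.

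Three small polish points:
\begin{itemize}
\item The counting sentence is garbled. It should simply say the space of multilinear polynomials has dimension $2^n$ and hence $2^{2^n}$ elements.
\item In the existence step no reduction by $a_j^2=a_j$ is actually needed. Each $\delta_b$ is a product of factors that are affine in distinct variables, so it is already multilinear.
\item The unitarity check is unnecessary, since a permutation gate is unitary by definition, but it is harmless.
\end{itemize}
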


Since we are considering Pauli-periodic Cliffords, one interesting class is the Pauli-periodic Clifford permutation.
\begin{definition}[Clifford permutation]
  An $n$-qubit Clifford permutation is a permutation gate that can be generated by a sequence of CNOT and $X$ gates.
\end{definition}

For Clifford permutations, we have these useful results:
\begin{lemma}[Clifford permutations are affine linear~\cite{he2025characterization}, Proposition~2.15]\label{lem:clifford-perm-affine}
  Let $U$ be an $n$-qubit Clifford permutation. Then there exist a binary matrix
  $M\in \mathrm{GL}(n,\mathbb{F}_2)$ and a fixed bitstring $\phi\in\{0,1\}^n$ such that for all $a\in\{0,1\}^n$,
  \begin{equation}
    U\ket{a}=\ket{Ma+\phi},
  \end{equation}
  where addition and matrix multiplication are over $\mathbb{F}_2$ (identifying bitstrings with column vectors in $\mathbb{F}_2^n$).
\end{lemma}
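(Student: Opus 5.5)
Since the paper defines a Clifford permutation as a permutation gate generated by CNOT and $X$ gates, I will not simply assume that such a gate is given by a circuit and then check the gates. Instead I will prove the stronger statement that \emph{any} permutation gate $U_\pi$ which is a Clifford unitary ($U_\pi\in\mathcal{C}_2^{(n)}$) is affine, $U_\pi\ket{a}=\ket{Ma+\phi}$ with $M\in\mathrm{GL}(n,\mathbb{F}_2)$. This covers the paper's wording, because products of CNOTs and $X$'s are Clifford and permutation gates. Afterwards I will note the converse, so the two descriptions coincide.

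\emph{Step 1 (diagonal Paulis go to diagonal Paulis).} Fix $i$ and compute $U_\pi Z_i U_\pi^\dagger$ on basis states. Since $U_\pi^\dagger\ket{b}=\ket{\pi^{-1}(b)}$, we get
\begin{equation}
U_\pi Z_i U_\pi^\dagger\ket{b}=(-1)^{[\pi^{-1}(b)]_i}\ket{b},
\end{equation}
so the conjugate is diagonal in the computational basis. Because $U_\pi$ is Clifford, this operator is also a Pauli up to phase. A diagonal Pauli must have no $X$-part, so it equals $\omega Z^{\mathbf{v}_i}$ for some $\mathbf{v}_i\in\mathbb{F}_2^n$ and phase $\omega$. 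Its entries are real $\pm1$ and it is not a pure phase times identity unless forced, so $\omega=\pm1$; write $\omega=(-1)^{c_i}$.

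\emph{Step 2 (read off affinity).} Comparing eigenvalues on each $\ket{b}$ gives
\begin{equation}
[\pi^{-1}(b)]_i=\mathbf{v}_i\cdot b+c_i \pmod 2\qquad\text{for all } b.
\end{equation}
Stacking these over $i$ gives $\pi^{-1}(b)=Ab+c$, with $A$ the matrix whose rows are the $\mathbf{v}_i$. Since $\pi^{-1}$ is a bijection of $\mathbb{F}_2^n$, the linear map $A$ must be injective, so $A\in\mathrm{GL}(n,\mathbb{F}_2)$. Inverting gives $\pi(a)=A^{-1}a+A^{-1}c$. Hence $M=A^{-1}$ and $\phi=A^{-1}c$ work.

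\emph{Step 3 (consistency with the definition).} For the converse, recall that $\mathrm{GL}(n,\mathbb{F}_2)$ is generated by elementary row additions $e_j\mapsto e_j+e_i$. Row swaps are not needed separately, since they are products of three additions. Each row addition is exactly the action of a CNOT on basis states. Translation by $\phi$ is the product $\prod_j X_j^{\phi_j}$. So every affine permutation is a CNOT$+X$ circuit, and the class defined in the paper equals the class of Clifford permutation gates.

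The only delicate point is Step 1. One must justify that the image Pauli has no $X$-component: a Pauli with nonzero $X$-part maps $\ket{b}$ to a different basis state, so it is not diagonal. One must also check that the phase is real, which holds because a Pauli string with phase $\pm\ii$ has non-real diagonal entries, whereas the conjugate has real entries. Everything else is bookkeeping.
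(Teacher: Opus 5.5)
Your argument is correct, but it does not follow the paper, because the paper gives no proof at all: it simply cites Proposition~2.15 of \cite{he2025characterization}.

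With the paper's own definition (a permutation gate generated by CNOT and $X$), the lemma is essentially immediate. Each $\mathrm{CNOT}$ with control $c$ and target $t$ acts on bitstrings as the invertible linear map $I+e_te_c^T$. Each $X_j$ acts as translation by $e_j$. A composition of invertible affine maps $a\mapsto Ma+\phi$ is again of that form.

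You instead prove the stronger, intrinsic statement that every permutation gate lying in $\mathcal{C}_2^{(n)}$ is affine. You observe that $U_\pi Z_i U_\pi^\dagger$ is diagonal with entries $(-1)^{[\pi^{-1}(b)]_i}$, so it must equal $\pm Z^{\mathbf{v}_i}$. This is exactly the level-two case of the bound on the degree of $\pi^{-1}$ (Lemma~2.13 of the same reference) that the paper invokes later.

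Combined with your Step~3 (transvections generate $\mathrm{GL}(n,\mathbb{F}_2)$, and a row swap is three XORs), your route shows that the paper's circuit-based definition coincides with the intrinsic notion ``permutation gate that is Clifford.'' The generator-by-generator argument does not give that. The only cost is length: for the statement as written, the two-line composition argument suffices.

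One small cleanup in Step~1: the clause ``it is not a pure phase times identity unless forced'' does not justify anything. The correct reason that $\omega=\pm 1$ is the one you give at the end: the diagonal entries $\omega(-1)^{\mathbf{v}_i\cdot b}$ must be real.
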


Lemma~\ref{lem:clifford-perm-affine} shows that Clifford permutations are precisely reversible affine maps on bitstrings.
To analyze their Pauli periodicity, it is useful to write down the associated binary symplectic representation.

\begin{proposition}[Binary symplectic matrix of a Clifford permutation]\label{prop:clifford-perm-symplectic}
  Let $U$ be an $n$-qubit Clifford permutation with linear part $M\in\mathrm{GL}(n,\mathbb{F}_2)$ as in Lemma~\ref{lem:clifford-perm-affine}. Then the binary symplectic matrix
  $F\in\mathrm{Sp}(2n,\mathbb{F}_2)$ of $U$ (Lemma~\ref{lem:binary-matrix-clifford}) is block diagonal and equals
  \begin{equation}
    F=
    \begin{pmatrix}
      M & 0\\
      0 & \left(M^{-1}\right)^T
    \end{pmatrix}.
  \end{equation}
\end{proposition}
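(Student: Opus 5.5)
The plan is a direct computation: conjugate the generating Paulis $X^{\mathbf{e}_j}$ and $Z^{\mathbf{e}_j}$ by $U$ and read off the binary labels. By Lemma~\ref{lem:binary-matrix-clifford}, $F$ is determined by the map $(\mathbf{x},\mathbf{z})\mapsto(\mathbf{x}',\mathbf{z}')$ on Pauli labels, with phases ignored. So it suffices to compute $U X^{\mathbf{x}} U^\dagger$ and $U Z^{\mathbf{z}} U^\dagger$ up to phase, using the affine form $U\ket{a}=\ket{Ma+\phi}$ from Lemma~\ref{lem:clifford-perm-affine}.

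\textbf{X-type Paulis.} For any basis state $\ket{b}$ with $b=Ma+\phi$,
\begin{equation}
  U X^{\mathbf{x}} U^\dagger\ket{b}=U X^{\mathbf{x}}\ket{a}=U\ket{a+\mathbf{x}}=\ket{Ma+M\mathbf{x}+\phi}=\ket{b+M\mathbf{x}}.
\end{equation}
Hence $UX^{\mathbf{x}}U^\dagger=X^{M\mathbf{x}}$ exactly, so the $X$-block of $F$ maps $\mathbf{x}\mapsto M\mathbf{x}$ and produces no $Z$-component.

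\textbf{Z-type Paulis.} With the same $b$, we have $a=M^{-1}(b+\phi)$, and
\begin{equation}
  U Z^{\mathbf{z}} U^\dagger\ket{b}=(-1)^{\mathbf{z}^T a}\ket{b}=(-1)^{\mathbf{z}^T M^{-1}\phi}\,(-1)^{((M^{-1})^T\mathbf{z})^T b}\ket{b}.
\end{equation}
Therefore $UZ^{\mathbf{z}}U^\dagger=\pm Z^{(M^{-1})^T\mathbf{z}}$. The sign depends only on $\phi$ and is absorbed into the phase data $\gamma$.

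\textbf{Combining.} A general Pauli $X^{\mathbf{x}}Z^{\mathbf{z}}$ is a product of the two types, and conjugation by $U$ is multiplicative. So $(\mathbf{x},\mathbf{z})\mapsto(M\mathbf{x},(M^{-1})^T\mathbf{z})$ up to phase, which gives exactly the claimed block-diagonal $F$.

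As a consistency check, this $F$ is symplectic: $F^T\Omega F=\Omega$ with $\Omega=\begin{pmatrix}0&I\\I&0\end{pmatrix}$, since $M^T(M^{-1})^T=I$. This also follows from Lemma~\ref{lem:binary-matrix-clifford}.

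No step is a real obstacle. The only point needing care is the transpose-inverse in the $Z$ block, which comes from rewriting $\mathbf{z}^T M^{-1} b$ as $((M^{-1})^T\mathbf{z})^T b$.
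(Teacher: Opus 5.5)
Your proposal is correct and follows essentially the same route as the paper. Both arguments conjugate $X$- and $Z$-type Paulis through the affine action $U\ket{a}=\ket{Ma+\phi}$, read off $\mathbf{x}\mapsto M\mathbf{x}$ and $\mathbf{z}\mapsto (M^{-1})^T\mathbf{z}$ (the latter via $a=M^{-1}(b+\phi)$), and conclude block-diagonality from the absence of $X$--$Z$ mixing; yours is just slightly more explicit, using general labels, giving the sign $(-1)^{\mathbf{z}^T M^{-1}\phi}$ exactly, and adding the symplecticity check.
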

\begin{proof}
  We verify this by directly computing the action of $U$ on Pauli generators. Let $e_i$ be the $i$th standard basis vector.  For each $i$, the operator $X_i$ acts as $\ket{a}\mapsto\ket{a+e_i}$, so
  \begin{equation}
    U X_i U^\dagger \ket{Ma+\phi}
    = U X_i \ket{a}
    = U\ket{a+e_i}
    = \ket{Ma+Me_i+\phi}.
  \end{equation}
  Thus $U X_i U^\dagger = X^{Me_i}$ up to phase. Equivalently, on $X$-labels the induced linear map is $\mathbf{x}\mapsto M\mathbf{x}$.

  \smallskip
  \noindent\emph{$Z$-type Paulis.}
  For each $i$, $Z_i\ket{a}=(-1)^{e_i\cdot a}\ket{a}$, hence
  \begin{equation}
    U Z_i U^\dagger \ket{Ma+\phi}
    = U Z_i \ket{a}
    = (-1)^{e_i\cdot a}\ket{Ma+\phi}.
  \end{equation}
  Writing $a=M^{-1}(Ma+\phi-\phi)$ gives $e_i\cdot a = ((M^{-1})^T e_i)\cdot (Ma+\phi) + \text{(constant)}$, so
  $U Z_i U^\dagger$ equals $Z^{(M^{-1})^T e_i}$ up to an overall phase.
  Therefore on $Z$-labels the induced linear map is $\mathbf{z}\mapsto (M^{-1})^T\mathbf{z}$.

  \smallskip
  Since conjugation by $U$ sends $X$-type Paulis to $X$-type Paulis and $Z$-type Paulis to $Z$-type Paulis, there is no $X$--$Z$ mixing, so the symplectic matrix is block diagonal with blocks $M$ and $(M^{-1})^T$.
\end{proof}

\begin{corollary}[Pauli-periodicity of Clifford permutations]\label{cor:clifperiod}
For a Clifford permutation to be Pauli-periodic, the corresponding matrix $M$ must be unipotent. Furthermore, the maximum Pauli periodicity that can be reached by any Clifford permutation on $n$ qubits is upper bounded by $\lceil \log_2(n)\rceil$.
\end{corollary}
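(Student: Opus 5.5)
The plan is to transfer Pauli periodicity of a Clifford permutation $U$ to the order of its linear part $M$, and then apply the unipotent bound of Lemma~\ref{lem:unipotent-symplectic} in dimension $n$ rather than $2n$.

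\emph{Step 1: reduce to $M$.} Suppose $U$ is Pauli-periodic with periodicity $m$, so $U^{2^m}\in\mathcal{P}_n$ up to phase. As in Eq.~\eqref{eq:pauli-implies-F-trivial}, the symplectic matrix then satisfies $F^{2^m}=I$. By Proposition~\ref{prop:clifford-perm-symplectic}, $F=\mathrm{diag}\bigl(M,(M^{-1})^T\bigr)$, so
\begin{equation}
  F^{2^m}=\mathrm{diag}\bigl(M^{2^m},((M^{-2^m}))^T\bigr).
\end{equation}
Hence $M^{2^m}=I$, i.e.\ $M\in\mathrm{GL}(n,\mathbb{F}_2)$ has $2$-power order.

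\emph{Step 2: unipotence.} The eigenvalue argument in the proof of Lemma~\ref{lem:unipotent-symplectic} does not use the symplectic structure or the size $2n$. Every eigenvalue of $M$ over $\overline{\mathbb{F}}_2$ has $2$-power order in a multiplicative group of odd order, so it equals $1$. Therefore $M$ is unipotent, which proves the first claim.

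\emph{Step 3: the bound.} Write $M=I+N$. Since $M$ is unipotent, its minimal polynomial is $(x-1)^r$ with $r\le n$, so $N^n=0$. The Frobenius identity \eqref{eq:frobenius-unipotent} gives $M^{2^k}=I+N^{2^k}$. Take $k=\lceil\log_2 n\rceil$; then $2^k\ge n$, so $N^{2^k}=0$ and $M^{2^k}=I$.

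It remains to show that $M^{2^k}=I$ forces $U^{2^k}$ to be Pauli. By Lemma~\ref{lem:clifford-perm-affine}, $U^{2^k}$ is an affine map $a\mapsto M^{2^k}a+\phi'=a+\phi'$, which is the Pauli $X^{\phi'}$. This is the one place where the Clifford-permutation structure is used, rather than just $F^{2^k}=I$. Alternatively, $F^{2^k}=I$ together with Lemma~\ref{lem:binary-matrix-clifford} already makes $U^{2^k}$ Pauli up to phase. Either way, the Pauli periodicity satisfies $m\le\lceil\log_2 n\rceil$.

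The main subtlety is conceptual rather than technical. One must note that the relevant nilpotency index is at most $n$, the size of the block $M$, rather than $2n$. Without this, applying Theorem~\ref{thm:lowerboundqubit} directly would only give the weaker bound $\lceil\log_2(2n)\rceil$.
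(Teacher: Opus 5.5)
Your proof is correct and follows the paper's own argument: the block-diagonal form of $F$ turns $F^{2^m}=I$ into $M^{2^m}=I$, the odd-order eigenvalue argument makes $M$ unipotent, and the nilpotency/Frobenius bound of Lemma~\ref{lem:unipotent-symplectic}, applied in dimension $n$, gives $m\le\lceil\log_2 n\rceil$. Your explicit check that $M^{2^k}=I$ makes $U^{2^k}$ equal to the Pauli $X^{\phi'}$ via the affine form fills in a small step the paper leaves implicit, but the proof is otherwise the same.
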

\noindent Indeed, if $U^{2^m}\in\mathcal{P}_n$ up to phase, then conjugation by $U^{2^m}$ acts trivially on Pauli labels, and hence the associated symplectic matrix satisfies $F^{2^m}=I$. Using Proposition~\ref{prop:clifford-perm-symplectic}, we obtain
$M^{2^m}=I$ and $\bigl((M^{-1})^T\bigr)^{2^m}=I$, and thus also $(M^T)^{2^m}=I$.
Therefore $M$ has $2$-power order, which (over $\mathbb{F}_2$) implies that $M$ is unipotent. Applying Lemma~\ref{lem:unipotent-symplectic} for the $n\times n$ matrix $M$ yields $m\le \lceil\log_2 n\rceil$, as claimed. Moreover, this upper bound can be achieved, for example, by taking $M$ to be a single Jordan block with ones on the diagonal and the superdiagonal, i.e.,
\begin{equation}\label{eq:jordan}
  M_{ij}=\begin{cases}
    1, & i=j \text{ or } i=j-1,\\
    0, & \text{otherwise}.
  \end{cases}
\end{equation}
The Clifford permutation corresponding to this $M$ is precisely the CNOT string $\left(\prod_{j=1}^{n-1}\mathrm{CNOT}_{j+1,j}\right)$ where the product is taken from right to left. Another simple choice that saturates the upper bound in Corollary~\ref{cor:clifperiod}, which can be realized in a constant depth circuit, is the brickwork CNOT circuit:
\begin{equation}\label{eq:brickworkcnot}
  \left(\prod_{j=1}^{\lfloor\frac{n-1}{2}\rfloor}\mathrm{CNOT}_{2j+1,2j}\right)\left(\prod_{j=1}^{\lfloor\frac{n}{2}\rfloor}\mathrm{CNOT}_{2j,2j-1}\right).
\end{equation}

We now consider the properties of the jumped Clifford permutations.

\begin{proposition}[Jumped Clifford permutations are quadratic]\label{prop:jumped-clifford-perm-lowdeg}
  Let $U$ be an $n$-qubit Clifford permutation, and let $CU$ denote the corresponding controlled gate on $n+1$ qubits.
  Then $CU$ is again a permutation gate. Moreover, identifying computational basis states with bitstrings
  $a=(a_0,a_1,\ldots,a_n)\in\{0,1\}^{n+1}$ (where $a_0$ is the control bit), the permutation induced by dressed permutation gate of the form $P\, CU\, Q$, where $P,Q$ are Clifford permutations over the $(n+1)$ qubits, can be written as
  \begin{equation}\label{eq:bigpi}
    a_i\longmapsto \Pi_i(a_0,a_1,\ldots,a_n),\qquad i=0,1,\ldots,n,
  \end{equation}
  where each $\Pi_i$ is a polynomial over $\mathbb{F}_2$ of degree at most $2$. 
\end{proposition}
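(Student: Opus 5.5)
The plan is to compute the permutation induced by $CU$ explicitly and then observe that composing with affine permutations on either side cannot raise the $\mathbb{F}_2$-degree. By Lemma~\ref{lem:clifford-perm-affine}, $U\ket{x}=\ket{Mx+\phi}$ with $M\in\mathrm{GL}(n,\mathbb{F}_2)$ and $\phi\in\{0,1\}^n$. Write $a=(a_0,\mathbf{a})$ with $\mathbf{a}=(a_1,\ldots,a_n)$. Then $CU$ maps $\ket{0,\mathbf{a}}\mapsto\ket{0,\mathbf{a}}$ and $\ket{1,\mathbf{a}}\mapsto\ket{1,M\mathbf{a}+\phi}$. This is a bijection of $\{0,1\}^{n+1}$, so $CU$ is a permutation gate.

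The induced map can be written in a single formula:
\begin{equation}
  (a_0,\mathbf{a})\longmapsto \bigl(a_0,\ \mathbf{a}+a_0\,\bigl((M-I)\mathbf{a}+\phi\bigr)\bigr).
\end{equation}
Each output coordinate is therefore $a_i+a_0\bigl(\sum_j (M-I)_{ij}a_j+\phi_i\bigr)$, which is a polynomial of degree at most $2$ over $\mathbb{F}_2$. The only degree-two terms are $a_0a_j$, and they are bilinear between the control bit and the target bits.

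Next we handle the dressing. $P$ and $Q$ are Clifford permutations on $n+1$ qubits, so Lemma~\ref{lem:clifford-perm-affine} (applied on $n+1$ qubits) makes them affine maps $b\mapsto Ab+\alpha$ and $b\mapsto Bb+\beta$. The permutation induced by $P\,CU\,Q$ is the composition $\Pi=\pi_P\circ\pi_{CU}\circ\pi_Q$; the operator product read right to left means $Q$ acts first. We check degrees in two steps:
\begin{enumerate}[(i)]
  \item Substituting affine-linear functions $b=Ba+\beta$ into a polynomial of degree at most $2$ gives a polynomial of degree at most $2$, because each monomial $b_ib_j$ becomes a product of two affine forms.
  \item Applying the affine map $A(\cdot)+\alpha$ to the output takes $\mathbb{F}_2$-linear combinations of the coordinates plus constants, which again does not raise the degree.
\end{enumerate}
Hence each $\Pi_i$ has degree at most $2$. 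The unique multilinear representative is also of degree at most $2$, since reducing $x^2=x$ never increases degree.

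I expect no real obstacle. The only care needed is bookkeeping: the composition order, and the convention that "polynomial" means the reduced multilinear form over $\mathbb{F}_2$, where $a_i^2=a_i$ can only lower degree.
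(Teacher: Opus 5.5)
Your proposal is correct and follows essentially the same argument as the paper: $Q$ contributes affine-linear substitutions, $CU$ is the only source of quadratic terms (all of the form control bit times target bit), and post-composition with the affine map of $P$ cannot raise the degree. Your explicit formula $(a_0,\mathbf{a})\mapsto\bigl(a_0,\ \mathbf{a}+a_0((M-I)\mathbf{a}+\phi)\bigr)$ simply makes concrete the paper's remark that $CU$ only creates cross terms involving the control bit.
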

\noindent This is straightforward to see: in a decomposition $P\,CU\,Q$, the Clifford permutation $Q$ first maps the input bits $a$ to new affine-linear bits $b=\Pi_Q(a)$, so that the control bit entering $CU$ is $b_0$. The only operation that can introduce products is the single controlled gate $CU$, which can only create cross terms involving $b_0$; composing with the subsequent Clifford permutation $P$ preserves the degree, so all output coordinates remain polynomials of total degree at most $2$ in the original input bits $a$. 

Proposition~\ref{prop:jumped-clifford-perm-lowdeg} also connects to the polynomial description of permutation gates. Lemma~2.13 of Ref.~\cite{he2025characterization} shows that, for permutation gates, the Clifford-hierarchy level upper-bounds the degree of a polynomial description of the inverse permutation $\pi^{-1}$. By contrast, the degree of the forward map $\pi$ itself does not directly control the hierarchy level. Jumped Clifford permutations provide a concrete illustration: even though they admit quadratic coordinate functions, they can still lie in arbitrarily high levels of the Clifford hierarchy given enough qubits.

Moreover, this hierarchy upper bound on $\deg(\pi^{-1})$ can be far from tight. Indeed, the inverse of a dressed jumped permutation is again of the form $Q^{\dagger}\,C(U^{\dagger})\,P^{\dagger}$, where $U^{\dagger}$ is a Clifford permutation with the same Pauli periodicity. Thus the inverse of the jumped permutation lives in the same level in the hierarchy. Meanwhile, the inverse map $\Pi^{-1}$ (with $\Pi$ defined in Eq.~\ref{eq:bigpi}) again admits a quadratic polynomial description, despite the high level nature of the original permutation gate.

In short, the controlled Clifford permutations demonstrate that low algebraic degree of $\pi$ or $\pi^{-1}$ does not preclude a permutation gate from belonging to high levels of the Clifford hierarchy.

\subsection{Optimal Jumped Cliffords: Controlled $S_X$-CNOT-$H$ Strings}\label{sec:sxcnoth}

We now construct another important example of Pauli-periodic Cliffords whose Pauli periodicity achieves the upper bound in Theorem~\ref{thm:lowerboundqubit}.

\begin{definition}[$S_X-CNOT-H$ string]
  Define $S_X:=HSH$ as the $\pi/2$ phase gate about the $X$ axis. For $n\ge 2$, we define the $S_X$-CNOT-$H$ string on $n$ qubits to be the Clifford unitary
  \begin{equation}
    SCH_n
    :=S_{X,n}\left(\prod_{j=1}^{n-1}\mathrm{CNOT}_{j,j+1}\right)H_1,
  \end{equation}
  where $\mathrm{CNOT}_{j,j+1}$ denotes a CNOT with control qubit $j$ and target qubit $(j+1)$, the product is taken from right to left (i.e. $S_{X,n}$ is followed by $CNOT_{n-1,n}$), and $S_{X,n}$ (resp.\ $H_1$) means applying $S_X$ (resp.\ $H$) on qubit $n$ (resp.\ qubit $1$) and identity on all other qubits. 
\end{definition}

The circuit diagram for $SCH_n$ is:
\begin{equation}
  \begin{quantikz}[row sep={0.5cm,between origins}, column sep=0.35cm]
    \lstick{$1$}      & \gate{H} & \ctrl{1} & \qw      & \qw      & \qw &\qw &\qw \\
    \lstick{$2$}      & \qw      & \targ{}  & \ctrl{1} & \qw      & \qw &\qw &\qw \\
    \lstick{$3$}      & \qw      & \qw      & \targ{}  & \ddots   & \qw &\qw &\qw \\
    \lstick{$\vdots$}\setwiretype{n} &       &     &    & \ddots   & \ctrl{1} &\qw&\qw  \\
    \lstick{$n$}      & \qw      & \qw      & \qw      & \qw      & \targ{} & \gate{S_X} &\qw 
  \end{quantikz}.
\end{equation}

\begin{proposition}[$S_X$-CNOT-$H$ string saturates the lower bound]\label{prop:sxcnoth}
  The Clifford unitary $SCH_n$ is Pauli periodic, and saturates the bound in Theorem~\ref{thm:lowerboundqubit}. That is, for every $n> 2$, $SCH_n$ has Pauli periodicity $\lceil \log_2(2n)\rceil$.
\end{proposition}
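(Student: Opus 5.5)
The plan is to work entirely in the binary symplectic picture of Lemma~\ref{lem:binary-matrix-clifford}. Let $F\in\mathrm{Sp}(2n,\mathbb{F}_2)$ be the symplectic matrix of $SCH_n$ and set $k:=\lceil\log_2(2n)\rceil$.

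The key reduction is that it suffices to show $F$ is a \emph{regular} unipotent element, i.e.\ $N:=F-I$ satisfies $N^{2n-1}\neq 0$. Granting this, the proof closes as follows.

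\begin{enumerate}[(i)]
\item $F$ is unipotent, so by Lemma~\ref{lem:unipotent-symplectic} we get $N^{2n}=0$ and $F^{2^k}=I$.
\item Then $U^{2^k}$ conjugates every Pauli to itself up to a sign. A unitary that preserves every Pauli up to sign is itself a Pauli up to phase, because the map $P\mapsto\pm$ is a character of the Pauli group and is therefore realized by conjugation with some Pauli. Hence $SCH_n^{2^k}\in\mathcal{P}_n$.
\item By the Frobenius identity~\eqref{eq:frobenius-unipotent}, $F^{2^{k-1}}=I+N^{2^{k-1}}$. Since $2^{k-1}<2n$, we have $2^{k-1}\le 2n-1$, so $N^{2^{k-1}}\neq 0$ and $F^{2^{k-1}}\neq I$.
\item Consequently $SCH_n^{2^{k-1}}$ acts nontrivially on Pauli labels and is not Pauli. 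The Pauli periodicity is therefore exactly $k$.
\end{enumerate}

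To establish regularity, I would first compute the conjugation action of each factor on the generators $X_j,Z_j$, ignoring signs since only $F$ matters.
\begin{itemize}
\item $H_1$ swaps $X_1\leftrightarrow Z_1$.
\item $\mathrm{CNOT}_{j,j+1}$ sends $X_j\mapsto X_jX_{j+1}$ and $Z_{j+1}\mapsto Z_jZ_{j+1}$, fixing $X_{j+1}$ and $Z_j$.
\item $S_{X,n}$ fixes $X_n$ and sends $Z_n\mapsto Y_n$, i.e.\ $(0,1)\mapsto(1,1)$ on qubit $n$.
\end{itemize}
Composing these in the stated order, the CNOT ladder alone should give $X_j\mapsto X_jX_{j+1}\cdots X_n$ and $Z_j\mapsto Z_1\cdots Z_j$. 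I would then write down $F$ explicitly in these coordinates.

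With $F$ in hand, I would exhibit an ordered basis in which $N$ is strictly triangular with all subdiagonal entries nonzero, which is equivalent to exhibiting a cyclic vector. A natural guess is the chain
\[
Z_n,\ Z_{n-1},\ \dots,\ Z_1,\ X_1,\ X_2,\ \dots,\ X_n.
\]
The intuition is as follows. Starting from a $Z$-type Pauli, $N$ strips off one $Z$ at a time along the ladder. The Hadamard on qubit $1$ turns the $Z$-sector into the $X$-sector. The $S_X$ on qubit $n$ feeds the end of the $X$-sector back into the $Z$-sector, which links the two halves into a single Jordan block of size $2n$ rather than two blocks of size $n$; two blocks would give only the Clifford-permutation bound $\lceil\log_2 n\rceil$ of Corollary~\ref{cor:clifperiod}. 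Concretely, I would show $N v_i\in v_{i+1}+\mathrm{span}(v_{i+2},\dots,v_{2n})$ for the ordered basis $(v_i)$. This gives $N^{2n-1}v_1=v_{2n}\neq 0$.

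The main obstacle is this bookkeeping step. One has to get the composition order of $H_1$, the CNOT ladder and $S_{X,n}$ right, and find the correct ordering of the basis so that the triangular structure actually holds. If the naive chain fails, I would instead compute the characteristic and minimal polynomial of $N$ directly. One could do this by checking small $n$ ($n=3,4,5$) to guess the pattern and then proving by induction on $n$ that $\operatorname{rank}N=2n-1$. Together with unipotence, rank $2n-1$ is equivalent to a single Jordan block.
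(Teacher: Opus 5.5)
Your reduction in steps (i)--(iv) is sound; the character argument in (ii) justifies a point the paper leaves implicit. The gap is the step that carries the proposition's content: showing that $N=F-I$ is a single Jordan block of size $2n$. You leave it unexecuted, and the concrete plan you give for it fails.

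\begin{itemize}
\item \textbf{The ladder's action on $Z$ is wrong.} Conjugating by $\mathrm{CNOT}_{1,2}$ first and $\mathrm{CNOT}_{n-1,n}$ last gives $Z_j\mapsto Z_{j-1}Z_j$, not $Z_1\cdots Z_j$. The $Z$-block must be $(A^{-1})^T$ for the $X$-block $A$, as in Proposition~\ref{prop:clifford-perm-symplectic}. Your pair of images is not even symplectic: $X_j$ and $Z_{j+2}$ would be sent to anticommuting operators.
\item \textbf{The correct action of $N$.} On the basis vectors, $Ne_1^X=e_1^X+e_1^Z$ and $Ne_1^Z=e_1^Z+\sum_{i=1}^n e_i^X$. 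For the rest, $Ne_j^X=\sum_{i>j}e_i^X$ for $j\ge 2$, $Ne_j^Z=e_{j-1}^Z$ for $2\le j\le n-1$, and $Ne_n^Z=e_{n-1}^Z+e_n^X$.
\item \textbf{No reordering of the standard basis can work.} Because $H_1$ exchanges $X_1$ and $Z_1$, $N$ has diagonal entries equal to $1$ at $e_1^X$ and $e_1^Z$. Diagonal entries survive any reordering of a basis. So no ordering of the standard generators makes $N$ strictly triangular, including your chain $Z_n,\dots,Z_1,X_1,\dots,X_n$.
\item \textbf{The linking mechanism is misattributed.} The two halves are linked entirely by $H_1$. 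The $S_X$ only adds the harmless term $e_n^X$ to $Ne_n^Z$; it does not feed the $X$-sector back into the $Z$-sector.
\item \textbf{The fallback is only a plan.} Rank $2n-1$ gives one Jordan block only once nilpotency is established separately. Lemma~\ref{lem:unipotent-symplectic} yields unipotence only under a $2$-power-order hypothesis, which is exactly what is being proved.
\end{itemize}

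The repair is small: replace $X_1$ by $Y_1$. Use the ordered basis $e_n^Z,e_{n-1}^Z,\dots,e_1^Z,\ e_1^X+e_1^Z,\ e_2^X,\dots,e_n^X$. From the formulas above, $Ne_1^Z=(e_1^X+e_1^Z)+\sum_{i\ge 2}e_i^X$ and $N(e_1^X+e_1^Z)=\sum_{i\ge 2}e_i^X$. So in this basis $N$ is strictly lower triangular with every subdiagonal entry equal to $1$. This gives nilpotency and $N^{2n-1}e_n^Z=e_n^X\neq 0$ at once, and your steps (i)--(iv) then finish the proof.

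This uses the same cyclic vector $e_n^Z$ and endpoint $e_n^X$ as the paper's appendix. The paper instead tracks $N^r e_n^Z$ explicitly through $N^{n+1}e_n^Z=v_{\mathrm{even}}$, followed by a separate Jordan argument on the $X$-subspace. It also tacitly uses that the $2n$ vectors $N^re_n^Z$ form a basis in order to pass from $N^{2n}e_n^Z=0$ to $N^{2n}=0$. The triangular basis organizes that computation more compactly and gives nilpotency directly.
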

We defer the verification of its periodicity to Appendix~\ref{app:proof_of_optimality}. As intuition for the large periodicity, note that the CNOT cascade maps a computational basis state to a two-component ``cat'' superposition supported on two Hamming-separated bitstrings, and the final $S_X$ injects a nontrivial phase between these components. Iterating $SCH_n$ repeatedly propagates and mixes these phases across the register, so that the support of a basis state under $(SCH_n)^t$ quickly spreads to an exponentially growing set of computational basis strings. One therefore expects that reaching a Pauli (and hence returning to the identity up to phase) requires a number of iterations that is linear in $n$, consistent with the exact order $2^{\lceil \log_2(2n)\rceil}$ implied by the nilpotency bound. Mathematically, the $H$ and $S_X$ gates spread the Jordan chain in Eq.~\eqref{eq:jordan} to both $X$ and $Z$ operators.

\section{Application: catalyzed logical phase gate}\label{sec:qecapplication}

Controlled unitaries are cornerstones of quantum algorithms; in phase estimation and related routines, coherent control converts an eigenphase of a unitary into a measurable bit string. In our setting, we exploit the same idea in a fault-tolerant context. Specifically, we use jumped Cliffords to prepare logical magic states that are eigenstates of Pauli-periodic Clifford unitaries with eigenvalue $e^{\pi \ii/2^k}$. These eigenstates can then be used as a catalyst to fault-tolerantly implement fine logical phase gates $(\bar Z)^{1/2^{k}}$

Existing measurement-and-postselection protocols typically target eigenstates of Clifford unitaries~\cite{Bravyi2019simulationofquantum,chamberland2020npjQuantumInf,gidney2024,davydova2025universal,chen2025,claes2025cultivating}, and then convert them into useful non-stabilizer resources. Canonical examples are
$\ket{H}\propto \frac{1+H}{2}\ket{+}$ and $\ket{CZ}\propto \frac{1+CZ_{1,2}}{2}\ket{++}$, where one implements the projector by coherently controlling the unitary with an ancilla, measuring the ancilla in the $X$ basis, and post-selecting on the $+1$ outcome.

It is tempting to apply the same approach to Pauli-periodic Cliffords using their control gates, i.e. the jumped Cliffords; however, doing so does not leverage the higher-level nature of jumped Cliffords. The reason is that the corresponding post-selected states already admit exact Clifford+$T$ preparations.
\begin{proposition}\label{prop:periodic-clifford-projector-ct}
  Let $U$ be an $m$-Pauli-periodic Clifford and let $\ket{\psi}$ be a Pauli eigenstate. Then the state $\ket{\psi}$ projected onto the $+1$ eigenspace of $U$, namely $\sum_{r=0}^{2^m-1} U^r \ket{\psi}$, can be prepared exactly using a Clifford+$T$ circuit.
\end{proposition}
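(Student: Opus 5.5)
The plan is to realize the (unnormalized) projection $\sum_{r=0}^{2^m-1}U^r\ket{\psi}$ by the same ancilla-controlled, post-selected gadget used for $\ket{H}$ and $\ket{CZ}$, and to check that every gate it uses has an exact Clifford+$T$ decomposition. The algebraic starting point is the binary factorization
\begin{equation}
  \sum_{r=0}^{2^m-1}U^r=\prod_{j=0}^{m-1}\bigl(I+U^{2^j}\bigr),
\end{equation}
which holds because every $r\in\{0,\dots,2^m-1\}$ has a unique binary expansion $r=\sum_j r_j2^j$, and the factors commute since they are all powers of $U$. Thus the target state is obtained by applying the $m$ commuting operators $(I+U^{2^j})/2$ to the Pauli eigenstate $\ket{\psi}$. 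The state $\ket{\psi}$ is itself a stabilizer state, so it can be prepared with Clifford gates alone.

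Each factor is implemented by a phase-estimation-style register. I take $m$ ancillas in $\ket{+}^{\otimes m}$ and apply $C(U^{2^j})$ controlled on ancilla $j$, for $j=0,\dots,m-1$. This produces
\begin{equation}
  2^{-m/2}\sum_{r=0}^{2^m-1}\ket{r}\otimes U^r\ket{\psi}.
\end{equation}
I then measure every ancilla in the $X$ basis and post-select on the all-$+$ outcome. This leaves exactly $2^{-m}\sum_r U^r\ket{\psi}$ on the target register. Equivalently, each ancilla implements one factor $(I+U^{2^j})/2$ via the identity $(\bra{+}\otimes I)\,C(V)\,(\ket{+}\otimes I)=(I+V)/2$.

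Next I verify that each $C(U^{2^j})$ is exactly Clifford+$T$. Since $U$ is Clifford, $U^{2^j}$ is Clifford; it is also Pauli-periodic, with periodicity $m-j$. Hence $C(U^{2^j})$ is a jumped Clifford, and Proposition~\ref{prop:cliffordT} gives it an exact Clifford+$T$ decomposition. Concretely, I would use the distribution law \eqref{eq:cdistribution} to write $C(U^{2^j})$ as a product of $C(S)$, $C(H)$ and $C(\mathrm{CNOT})$. Alternatively, one can apply $C(U)$ $2^j$ times, which uses only a single jumped Clifford at level $m+2$. Either way, no approximate synthesis of fine rotations enters, so the entire circuit (stabilizer preparation, controlled Cliffords, $X$-basis measurements) is an exact Clifford+$T$ circuit with post-selection.

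The step needing the most care is the meaning of ``prepared exactly''. The gadget is exact but post-selected, exactly as in the standard $\ket{H}$ and $\ket{CZ}$ protocols cited above, and I would state the result in that sense. I would also note that the success probability equals $\|2^{-m}\sum_r U^r\ket{\psi}\|^2$, so it is nonzero precisely when the projected state is nonzero; this is the implicit hypothesis of the statement. If a deterministic preparation were demanded, I would first try exact amplitude amplification. However, that requires knowing the success amplitude and reflecting about the target subspace, and the latter reflection is itself built from the same controlled Cliffords. I would therefore regard the post-selected version as the intended claim. The main point relevant to the section is that this preparation never exploits the higher-level character of the jumped Cliffords.
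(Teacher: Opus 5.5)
Your proposal is correct and follows the paper's own argument. Like the paper, you realize $\sum_{r}U^r\ket{\psi}$ with ancilla-controlled powers of $U$ followed by $X$-basis measurement and post-selection, then invoke Proposition~\ref{prop:cliffordT} for exact Clifford+$T$ synthesis of each controlled Clifford. Your factorization $\prod_{j}(I+U^{2^j})$ simply spells out the ``multiple controlled-$U$ gates'' that the paper leaves implicit, and the paper additionally restates the conclusion as a ZX-diagram whose spider phases are multiples of $\pi/4$.
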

\noindent One way to see this is to note that the projector $\sum_{r=0}^{2^m-1} U^r$ can be generated by a circuit containing multiple controlled-$U$ gates followed by Pauli measurements and post-selection; by Proposition~\ref{prop:cliffordT}, controlled Pauli-periodic Cliffords admit exact Clifford+$T$ implementations. Together with the Pauli basis initialization of $\ket{\psi}$, in the ZX-calculus, the corresponding ZX-diagram for the projected state only has spiders whose phases are integer multiples of $\pi/4$~\cite{Backens2014,JeandelPerdrixVilmart2019}.

To implement finer phases, we should use jumped Cliffords in a way that exploits their place in the high level of the Clifford hierarchy. The key idea is to instead prepare an eigenstate of the Pauli-periodic Clifford $U$ with eigenvalue $e^{\pi \ii/2^k}$. Such an eigenvalue of $U$ could exist for an $m$-Pauli-periodic Clifford with Pauli periodicity of at least $m=k-1$, in which case the Pauli operator obtained via squaring the unitary $U^{2^m}=P$ contains a phase $\ii$\footnote{We note that in this case the level of the controlled gate $CU$ matches the level of the phase gate $Z^{1/2^k}$: both lie in the $(k+1)$-th level from Proposition~\ref{prop:CH-basic}(6) and Theorem~\ref{thm:controlled-jump-criterion}. Therefore, the effect of the controlled jump has been fully exploited. If $P=U^{2^m}$ do not contain phase $i$, but at least is a non-trivial Pauli string, we will ``waste'' the jumped level by $1$.}. This is known as the catalyst state in the literature~\cite{campbell2011catalysis,Gidney2019efficientmagicstate,Beverland2020lower}. We denote this eigenstate by $\ket{\psi_k}$, which satisfies
\begin{equation}
  U\ket{\psi_k}=e^{\pi \ii/2^k}\ket{\psi_k}.
\end{equation}
Once $\ket{\psi_k}$ is available, a single application of the jumped Clifford $CU$ ``kicks back'' this eigenphase onto the control qubit, thereby producing the standard single-qubit phase gate $Z^{1/2^k}$:
\begin{equation}
  CU\,\ket{\phi}\otimes\ket{\psi_k}=\phi_0\ket{0}\otimes\ket{\psi_k}+\phi_1\ket{1}\otimes U\ket{\psi_k}=\left(\phi_0\ket{0}+e^{\frac{\pi \ii}{2^k}}\phi_1\ket{1}\right)\otimes\ket{\psi_k}=Z^{1/2^k}\ket{\phi}\otimes \ket{\psi_k}.
\end{equation}
Here $\ket{\phi}=\phi_0\ket{0}+\phi_1\ket{1}$ is an arbitrary single qubit state.

In a fault-tolerant implementation, this reduces the problem of fault-tolerantly implement $\overline{Z^{1/2^k}}$ to preparing the logical catalyst eigenstate $\ket{\overline{\psi_k}}$ and applying $CU$ once at the logical level. Importantly, Proposition~\ref{prop:cliffordT} ensures that the logical gate $CU$ can be implemented exactly using Clifford+$T$. Thus, if one works with a quantum error-correcting code that admits transversal implementations of both $T$ and the chosen periodic Clifford $U$, the entire routine can be executed transversally without any additional non-stabilizer resource states beyond those used for transversal $T$. For instance, one may take $U$ to be a CNOT string, like the constant depth one in Eq.~\eqref{eq:brickworkcnot}, and use the 3D color code~\cite{bombin2007exact,Kubica2015unfolding,zhu2025nonclifford}, which supports transversal implementations of both $T$ and CNOT, since it is a Calderbank-Shor-Steane (CSS) code~\cite{calderbank1996good,steane1996multiple}.

We now describe a logical-level circuit for preparing $\ket{\overline{\psi_k}}$. 
Suppose $U$ is supported on $n$ physical qubits, and that the chosen code admits a transversal implementation of $U$ across $n$ code blocks. Then the resulting logical operation factorizes as
\begin{equation}
  \bar U=\otimes_{p=1}^{l}U_p,
\end{equation}
where $l$ is the number of disjoint transversal parties, each contains $n_p=O(1)$ qubits, and $U_p$ denotes the restriction of $U$ to that party. Consequently, a controlled logical operation between a single ancilla and the encoded data can be implemented as a product of smaller controlled gates,
\begin{equation}
  C(\bar U)=\prod_{p=1}^l C(U_p),
\end{equation}
as illustrated by the circuit representation
\begin{equation}\label{eq:cubar}
  C(\bar U)=\begin{quantikz}[row sep={0.65cm,between origins}, column sep=0.5cm]
    & \qw & &\ctrl{1} &  \qw \\
    & \qwbundle{\sum_p n_p} & &\gate{\bar U} & \qw \\
    \end{quantikz}
    =
    \begin{quantikz}[row sep={0.65cm,between origins}, column sep=0.5cm]
    & \qw & \ctrl{1} & \ctrl{2} & \qw      & \ctrl{4} & \qw \\
    & \qwbundle{n_1} & \gate{U_1} & \qw      & \qw      & \qw      & \qw \\
    & \qwbundle{n_2} & \qw      & \gate{U_2} & \qw      & \qw      & \qw \\
    \setwiretype{n}& \ \vdots &      &     & \ddots   &       &  \\
    & \qwbundle{n_l} & \qw      & \qw      & \qw      & \gate{U_l} & \qw
  \end{quantikz}.
\end{equation}
With this decomposition in hand, we prepare $\ket{\overline{\psi_k}}$ via quantum phase estimation (QPE) on $\bar U$ using physical ancillas. Concretely, we use $k+1$ ancilla qubits to resolve phases in multiples of $2\pi/2^{k+1}$ (since the target eigenphase is $\pi/2^k$), apply controlled powers of $\bar U$, and then apply the inverse quantum Fourier transform (QFT) and measure the ancillas:
\begin{equation}
  \begin{quantikz}[row sep={1cm,between origins}, column sep=0.5cm]
  \ket{+}^{\otimes (k+1)} & \qwbundle{k+1} & \qw &\qw &\qw & \ctrl{3} &\gate{QFT} & \meter{}\\
  \ket{+}^{\otimes (k+1)} & \qwbundle{k+1} & \qw &\qw &\qw &\qw &\qw & \ctrl{2} &\gate{QFT} & \meter{}\\
   \setwiretype{n}\vdots& &      &     &    &       &  \\
  \ket{\psi_k} & \qwbundle{n} & \gate{ENC} & \qwbundle{\sum_p n_p} &\qw & \gate{\prod(\bar U)^{2^{q-1}}} & \gate{SE} & \gate{\prod(\bar U)^{2^{q-1}}} & \gate{SE} &\cdots & \ket{\overline{\psi_k}}
  \end{quantikz}.
\end{equation}
Here we first prepare the physical eigenstate $\ket{\psi_k}$ of $U$ on $n$ qubits. This can be done using physical qubit-level QPE. We then encode it (ENC) into $n$ logical qubits, yielding an initial, noisy realization of $\ket{\overline{\psi_k}}$. We then run QPE between the $k+1$ ancillas and the encoded register by applying $C((\bar U)^{2^{q-1}})$ controlled by the $k$th ancilla (for $q=1,\dots,k+1$), and post-select the measurement outcome $(1)_2$ (i.e. all the digits are 0 except for the first one), which corresponds to the desired eigenvalue $e^{\frac{\ii\pi}{2^k}}$ of $U$. Between these controlled-power steps, we interleave syndrome extraction (SE) rounds, so that repeating the routine improves the fidelity of the resulting logical eigenstate. Adding the steps of growing the code distance, the entire protocol can be potentially developed into a cultivation scheme of the catalyst state $\ket{\psi_m}$.

In each round of the logical measurement circuit in Eq.~\eqref{eq:cubar}, the depth scales linearly with the number of transveral parties $l$, since $C(\bar U)$ requires sequential action of physical $CU$ gates on each transversal party, and each physical $U_p$ can be performed in constant depth, if one uses constant depth realization of periodic Cliffords, such as the brickwork CNOT circuit in Eq.~\eqref{eq:brickworkcnot}. The depth of physical $CU$ gates can be further reduced to a constant if one uses a fan-out circuit on between the ancilla and a set of $l$ additional ancillas before the $CU$ gates, similar to the idea in Ref.~\cite{kim2025any}. The depth of the (non-fault-tolerant) QFT part of the circuit scales linearly with $k$.

We point out that the controlled logical gates in the logical QPE, $C((\bar U)^{2^{q-1}})$, and the phase gates in the physical QFT, all lie in definite levels the Clifford hierarchy for every $q$, thanks to Proposition~\ref{prop:tensor-product-periodicity}. 
In addition, our ability to pinpoint the level of the hierarchy for the entire logical circuit suggests our protocol for catalyst state preparation may be amenable to more systematic analysis of Pauli error propagation which aids to the possibility of classical simulation and benchmarking. We leave the fault-tolerance analysis and the development of simulation technique to future work. 

\section{Conclusion and Open Problems}
\label{sec:conclusion}
\label{sec:jumpfromnonclif}% Retained for backward compatibility with in-text cross-references
% NOTE: The label sec:jumpfromnonclif is kept for compatibility with the cross-reference at line 443,
% but sec:conclusion is the canonical label for this section.

This work identifies a simple mechanism, coherent control of periodic Clifford gates, that generates explicit, exactly-synthesizable families of high-level Clifford-hierarchy operations.
Our first contribution is conceptual: we introduced \emph{Pauli periodicity} as an invariant tailored to controlled constructions, capturing the precise point at which repeated squaring of a Clifford collapses to a Pauli.
Our main structural result, Theorem~\ref{thm:controlled-jump-criterion}, turns this invariant into a sharp controlled-jump rule: for a Pauli-periodic Clifford $U$, the hierarchy level of $CU$ is determined exactly by the Pauli periodicity of $U$.
We also showed that the resulting jumped Cliffords enjoy useful closure and compilation properties, such as stability under inversion and under tensor-product targets, and exact Clifford+$T$ synthesis.
Our second contribution is quantitative: Theorem~\ref{thm:lowerboundqubit} bounds Pauli periodicity as a function of the number of qubits, and together with explicit saturating families it yields a tight resource tradeoff. Namely, although controlled Cliffords can reach arbitrarily high hierarchy levels in principle, a jump to level $k$ from Clifford targets requires a number of target qubits that grows exponentially with $k$.
Finally, we connected the algebraic picture to fault tolerance by proposing a catalyst-state preparation routine and a catalyzed logical phase-gate protocol based on a single jumped Clifford.

On the level of mathematics, our results raise several natural open problems. A first direction is to go beyond Clifford inputs and study the \emph{controlled-jumping power} of unitaries already in the third and higher levels of the hierarchy. In particular:
\begin{open}
  Among all the $n$-qubit unitaries $U\in \mathcal{C}^{(n)}_k$ in the $k$-th level of the Clifford hierarchy with Pauli periodicity $m$ (i.e. $U^{2^m}\in \mathcal{P}_n$), what is the highest level of the Clifford hierarchy that $CU$ can reach?
\end{open}
\noindent This problem is subtle because the outcome can depend simultaneously on three parameters: the periodicity $m$, the number of target qubits $n$, and the input level $k$.
For Clifford inputs, Theorem~\ref{thm:lowerboundqubit} tightly couples $m$ to $n$, but in higher levels this coupling can disappear: for example, the single-qubit phase gate $Z^{1/2^{k-1}}\in \mathcal{C}^{(1)}_k$ (Proposition~\ref{prop:CH-basic}(6)) has Pauli periodicity $k-1$ independent of $n$.
At the same time, large periodicity does \emph{not} automatically translate into a large controlled jump: Ref.~\cite{CuiGottesmanKrishna2017} shows that $CZ^{1/2^{k-1}}\in \mathcal{C}^{(2)}_{k+1}$, i.e., adding a control increases the level by only one.
Understanding when coherent control produces a \emph{genuinely super-constant} level increase for non-Clifford inputs would reveal new structure in $\mathcal{C}_k$ and could potentially bypass the exponential qubit overhead inherent to Clifford targets.

Independently of controlled jumps, the relationship between Pauli periodicity and the number of qubits is interesting in its own right. More explicitly:
\begin{open}
  Given a Pauli-periodic unitary $U\in \mathcal{C}^{(n)}_k$, what is the largest Pauli periodicity $m_\mathrm{max}$ that can be achieved by $U$? In particular, for a fixed high-enough level $k$, is it possible to reach $m_\mathrm{max}\sim \mathrm{poly}(n)$ instead of $\log (n)$? 
\end{open}
\noindent For the problem to be meaningful, we must have $m\geq k$, since we always have a $(k-1)$-Pauli-periodic unitary in the $k$-th level, namely $Z^{1/2^{k-1}}$. This question is highly meaningful given the fact that if one drops the Pauli-periodicity constraint and instead asks for the maximal order of an $n$-qubit unitary within a given level of the hierarchy, one can obtain the maximum period of $2^n-1$ (which is \emph{not} a power of 2). A famous example is the linear-feedback shift-register (LFSR) construction~\cite{tausworthe1965random,massey1969shift}, which can be implemented by Clifford circuits in quantum computation~\cite{kim2025catalytic,ippoliti2025infinite}. In contrast, the open problem poses a much stronger requirement that some $2$-power of $U$ collapses to a Pauli. In the context of controlled jump, if $m_\mathrm{max}\sim \mathrm{poly}(n)$ for a certain level $k$, the qubit resource needed for controlled jump, and subsequently the protocol for logical phase gate, could be dramatically reduced.

The discussion in Sec.~\ref{sec:cclifperm} raises a related set of questions about the complexity of permutation gates within the Clifford hierarchy. Proposition~\ref{prop:jumped-clifford-perm-lowdeg} shows that the Clifford-dressed jumped permutation gates admit quadratic coordinate maps and yet can occupy arbitrarily high hierarchy levels given large enough $n$. However, the quadratic structure realized there is highly restricted: all genuinely quadratic terms arise only from products between a single control bit and affine-linear functions of the target bits. This motivates asking for a principled, algorithmic classification of general quadratic permutations.
\begin{open}
  Let $U_\pi$ be an $n$-qubit permutation gate implementing a bijection $\pi:\{0,1\}^n\to\{0,1\}^n$. Assume that both $\pi$ and/or $\pi^{-1}$ admit coordinatewise representations by polynomials over $\mathbb{F}_2$ of total degree at most $2$ (or in general a fixed degree $l$). Given such a polynomial description of $\pi$, is there an efficient algorithm that computes the smallest $k$ such that $U_\pi\in\mathcal{C}^{(n)}_k$? 
\end{open}

On the level of applications, we hope that Pauli-periodic Cliffords and their controlled jumps can lead to more efficient preparation and simulation of resource states that can be used to implement high-level logical gates, such as the catalyzed phase-gate protocol in Sec.~\ref{sec:qecapplication}. It is instructive to contrast our setting with approaches such as Ref.~\cite{kim2025catalytic}, which realize a fine phase $2\pi/(2^k-1)$ using eigenstates of periodic Cliffords on only $k$ qubits, rather than the $n\sim e^k$ target qubits required to implement a level-$k$ controlled jump using Clifford targets in our protocol. The tradeoff there is that the controlled unitaries in their case lie completely outside the Clifford hierarchy according to Theorem~\ref{thm:AW-necessary}. This can make classical simulation and fault-tolerance analysis substantially more challenging for the fault-tolerant version of such an algorithm. 

\paragraph{Acknowledgement} YX acknowledges support by the NSF through the grant OAC-2118310. XW acknowledges support by the U.S. Department of Energy through Award Number DE-SC0023905.

\bibliographystyle{quantum}
\bibliography{main}

\appendix
\section{Proof of Proposition~\ref{prop:sxcnoth}}\label{app:proof_of_optimality}

From Lemma~\ref{lem:binary-matrix-clifford}, to prove the Pauli periodicity of $SCH_n$ is $\lceil \log_2(2n)\rceil$, we show that $N_n:=M_n-I_{2n}$, where $M_n$ is the binary symplectic matrix for $SCH_n$, has nilpotency index $2n$. That is, it satisfies $N_n^{2n}=0$ but $N_n^{r}\neq 0$ for all $0\leq r\leq 2n-1$.

The matrix $M_n$ has the following block form:
\begin{equation}
  M_n=\begin{pmatrix}A_n & B_n\\ C_n & D_n\end{pmatrix}\in \mathrm{Sp}(2n,\mathbb{F}_2),
\end{equation}
where the $n\times n$ blocks $A_n,B_n,C_n,D_n$ are given entrywise (for $1\le i,j\le n$) by
\begin{align}
  (A_n)_{ij} &=
  \begin{cases}
    1, & i\ge j \text{ and } j>1,\\
    0, & \text{otherwise},
  \end{cases}
  \label{eq:Ap-def}\\
  (B_n)_{ij} &=
  \begin{cases}
    1, & j=1 \text{ or } (i=j=n),\\
    0, & \text{otherwise},
  \end{cases}
  \label{eq:Bp-def}\\
  (C_n)_{ij} &=
  \begin{cases}
    1, & i=j=1,\\
    0, & \text{otherwise},
  \end{cases}
  \label{eq:Cp-def}\\
  (D_n)_{ij} &=
  \begin{cases}
    1, & (i=j \text{ and } i>1)\text{ or } (j=i+1),\\
    0, & \text{otherwise}.
  \end{cases}
  \label{eq:Dp-def}
\end{align}

We now compute the action of $N_n$ on the symplectic basis. Let $\{e_j\}_{j=1}^n$ be the standard basis of $\mathbb{F}_2^n$, i.e. $(e_j)_{k}=\delta_{j,k}$.
We write elements of $(\mathbb{F}_2)^{2n}$ as column vectors $(x,z)$ with $x,z\in\mathbb{F}_2^n$, and define
\[
e_j^X := (e_j,0)^T,
\qquad
e_j^Z := (0,e_j)^T,
\qquad (1\le j\le n).
\]
Using the action of $M_n$, the bitstrings are transformed as $(x',z')^T=(A_nx+B_nz,\; C_nx+D_nz)^T$. Therefore, we have the following identities in $(\mathbb{F}_2)^{2n}$.
For the $X$-basis vectors,
\begin{align}
N_n e_1^X
&=(A_n-I)e_1 \;+\; (C_n)e_1
\;=\; e_1^X+e_1^Z,
\label{eq:N-x1}\\
N_n e_j^X
&=((A_n-I)e_j,\; C_ne_j)
\;=\;\Bigl(\sum_{i=j+1}^n e_i,\;0\Bigr)
\;=\;\sum_{i=j+1}^n e_i^X,
\qquad 2\le j\le n-1,
\label{eq:N-xj}\\
N_n e_n^X
&=0.
\label{eq:N-xn}
\end{align}
For the $Z$-basis vectors,
\begin{align}
N_n e_1^Z
&=(B_ne_1,\; (D_n-I)e_1)
\;=\;\Bigl(\sum_{i=1}^n e_i,\; e_1\Bigr)
\;=\; e_1^Z+\sum_{i=1}^n e_i^X,
\label{eq:N-z1}\\
N_n e_j^Z
&=(B_ne_j,\; (D_n-I)e_j)
\;=\;(0,\; e_{j-1})
\;=\; e_{j-1}^Z,
\qquad 2\le j\le n-1,
\label{eq:N-zj}\\
N_n e_n^Z
&=(B_ne_n,\; (D_n-I)e_n)
\;=\;(e_n,\; e_{n-1})
\;=\; e_n^X+e_{n-1}^Z.
\label{eq:N-zn}
\end{align}

We now show that, starting from $e_n^Z$, the repeated action of $N_n$ generates a length-$2n$ Jordan chain.
From \eqref{eq:N-zn} and \eqref{eq:N-xn} we have
\[
N_n^k e_n^Z
= N_n^{k-1}(e_{n-1}^Z+e_n^X)
= N_n^{k-1}e_{n-1}^Z
\qquad (k\ge 2).
\]
Iterating \eqref{eq:N-zj} gives
\begin{equation}
\label{eq:zn-to-z1}
N_n^{\,n-1} e_n^Z \;=\; e_1^Z.
\end{equation}
Applying \eqref{eq:N-z1} then yields
\begin{equation}
\label{eq:zn-to-z1-plusX}
N_n^{\,n} e_n^Z
\;=\;
N_n e_1^Z
\;=\;
e_1^Z+\sum_{i=1}^n e_i^X.
\end{equation}
Applying $N_n$ once more, and using \eqref{eq:N-z1} together with \eqref{eq:N-x1}--\eqref{eq:N-xn}, we find
\begin{align*}
N_n^{\,n+1} e_n^Z
&= N_n\Bigl(e_1^Z+\sum_{i=1}^n e_i^X\Bigr) \\
&= \Bigl(e_1^Z+\sum_{i=1}^n e_i^X\Bigr)
+ \Bigl(e_1^X+e_1^Z\Bigr)
+ \sum_{j=2}^{n-1}\ \sum_{i=j+1}^n e_i^X \\
&= \sum_{i=1}^n e_i^X + e_1^X + \sum_{i=3}^n (i-2)\,e_i^X.
\end{align*}
Working over $\mathbb{F}_2$, the coefficient $(i-2)$ equals $1$ if and only if $i$ is odd.
Hence the odd-$i$ terms cancel against $\sum_{i=1}^n e_i^X$, and we obtain the clean form
\begin{equation}
\label{eq:even-X}
N_n^{\,n+1} e_n^Z
\;=\;
\sum_{\substack{2\le i\le n\\ i\ \mathrm{even}}} e_i^X
\;=:\; v_{\mathrm{even}}.
\end{equation}

We now analyze powers of $N_n$ on the $X$-subspace $\mathrm{span}\{e_2^X,\dots,e_n^X\}$.
Define $y_k:=e_{n-k+1}^X$ for $1\le k\le n-1$, so $y_1=e_n^X$ and $y_{n-1}=e_2^X$.
From \eqref{eq:N-xj}--\eqref{eq:N-xn} we have
\[
N_n y_1=0,
\qquad
N_n y_k = y_1+y_2+\cdots+y_{k-1}\quad (k\ge 2).
\]
We claim that for every $k\ge 1$,
\begin{equation}
\label{eq:Jordan-claim}
N_n^{\,k-1}y_k = y_1
\qquad\text{and}\qquad
N_n^{\,k}y_k=0.
\end{equation}
The statement is immediate for $k=1$.
For $k\ge 2$, note that $N_n y_k = y_{k-1} + N_n y_{k-1}$, hence
\[
N_n^{\,k-1}y_k
= N_n^{\,k-2}(N_n y_k)
= N_n^{\,k-2}y_{k-1} + N_n^{\,k-1}y_{k-1}
= y_1 + 0,
\]
where we used \eqref{eq:Jordan-claim} for $k-1$.
Applying $N_n$ once more gives $N_n^{\,k}y_k=N_n y_1=0$, proving \eqref{eq:Jordan-claim}.

Taking $k=n-1$ in \eqref{eq:Jordan-claim} yields
\begin{equation}
\label{eq:Nn-2-x2}
N_n^{\,n-2} e_2^X \;=\; e_n^X,
\qquad
N_n^{\,n-1} e_2^X \;=\; 0.
\end{equation}
Moreover, for any $j\ge 3$, repeated use of \eqref{eq:N-xj} shows that $N_n^{\,n-2}e_j^X=0$ (since the $X$-support strictly shifts to the right at each application).
Because $v_{\mathrm{even}}$ in \eqref{eq:even-X} contains $e_2^X$ and only $e_j^X$ with $j\ge 3$ otherwise, we conclude
\begin{equation}
\label{eq:Nn-2-even}
N_n^{\,n-2} v_{\mathrm{even}} \;=\; e_n^X\neq 0,
\qquad
N_n^{\,n-1} v_{\mathrm{even}} \;=\; 0.
\end{equation}

Combining \eqref{eq:even-X} and \eqref{eq:Nn-2-even} gives
\[
N_n^{\,2n-1} e_n^Z
= N_n^{\,n-2}\bigl(N_n^{\,n+1}e_n^Z\bigr)
= N_n^{\,n-2}v_{\mathrm{even}}
= e_n^X
\neq 0,
\]
while
\[
N_n^{\,2n} e_n^Z
= N_n^{\,n-1}\bigl(N_n^{\,n+1}e_n^Z\bigr)
= N_n^{\,n-1}v_{\mathrm{even}}
= 0.
\]
Therefore, the nilpotency index of $N_n$ is exactly $2n$.

\end{document}